\newtheorem{theorem}{Theorem}
\newtheorem{lemma}[theorem]{Lemma}
\newtheorem{corollary}[theorem]{Corollary}
\newtheorem{proposition}[theorem]{Proposition}
\newtheorem{definition}[theorem]{Definition}
\newcommand{\runtitle}[1]{\vspace{1ex}{\small \textbf{\boldmath #1}}}
\newcommand{\bl}{\text{BL}}
\newcommand{\PBMP}{\textup{P-BMP}}
\newcommand{\OPBMP}{\textup{1D-P-BMP}}
\newcommand{\BMP}{\textup{BMP}}
\newcommand{\OBMP}{\textup{1D-BMP}}
\newcommand{\WMSA}{\textup{WMSA}}
\newcommand{\MRCT}{\textup{MRCT}}
\newcommand{\len}{{\ell}}
\newcommand{\probeset}{{\mathcal S}}
\newcommand{\neighbor}{{\mathcal N}}
\newcommand{\embed}{{\varepsilon}}
\newcommand{\placement}{{\phi}}
\newcommand{\borderlen}{\textup{BL}}
\newcommand{\border}{\textup{border}}
\newcommand{\share}{\textup{share}}
\newcommand{\mask}{{\mathcal M}}
\newcommand{\comment}[1]{}
\title{Hardness and Approximation of
\\ The Asynchronous Border Minimization Problem}
\author{Alexandru Popa\thanks{
Department of Computer Science, University of Bristol,
Email: \tt{popa@cs.bris.ac.uk}}
\and Prudence W.H. Wong\thanks{
Department of Computer Science, University of Liverpool,
Email: \tt{pwong@liverpool.ac.uk, ccyung@graduate.hku.hk}}
\and Fencol C.C. Yung$^\dagger$
}
\begin{document}

\begin{titlepage}

\maketitle
\thispagestyle{empty}

\comment{
Choices of title
\begin{itemize}
\item
  Hardness and Improved Approximation Results for
  The Border Minimization Problem
\item
  Complexity and Improved Approximation Results for
  The Asynchronous Border Minimization Problem
\item
  Approximation and Hardness Results for
  The Asynchronous Border Minimization Problem
\item NP-Hardness of Asynchronous Border Minimization Problem
  and Improved Approximation Algorithms
\end{itemize}
}

\begin{abstract}
We study a combinatorial problem arising from microarrays synthesis.
The synthesis is done by a light-directed chemical process.
The objective is to minimize unintended illumination that
may contaminate the quality of experiments.
Unintended illumination is measured by a notion called border length
and the problem is called Border Minimization Problem ($\BMP$).
The objective of the $\BMP$ is to place a set of probe sequences
in the array and find an embedding (deposition of nucleotides/residues to
the array cells) such that the sum of border length is minimized.
A variant of the problem, called $\PBMP$, is that the placement is given and
the concern is simply to find the embedding.

Approximation algorithms have been proposed for the problem~\cite{LWY+08}
but it is unknown whether the problem is NP-hard or not.
In this paper, we give a thorough study of different variations of $\BMP$
by giving NP-hardness proofs and improved approximation algorithms.
We show that $\PBMP$, $\OBMP$, and $\BMP$ are all NP-hard.
Contrast with the result in~\cite{LWY+08} that $\OPBMP$ is polynomial time solvable,
the interesting implications include
(i) the array dimension (1D or 2D) differentiates the complexity of $\PBMP$;
(ii) for 1D array, whether placement is given differentiates
the complexity of $\BMP$;
(iii) $\BMP$ is NP-hard regardless of the dimension of the array.
Another contribution of the paper is improving the approximation for $\BMP$
from $O(n^{1/2} \log^2 n)$ to $O(n^{1/4} \log^2 n)$, where $n$ is the total number of sequences.
\end{abstract}

\end{titlepage}

\section{Introduction}
\label{sec:intro}

DNA and peptide microarrays~\cite{GR+99,CC09} are important research tools
used in gene discovery, multi-virus discovery,
disease and cancer diagnosis.
Apart from measuring the amount of gene expression~\cite{ST+00},
microarray is an efficient tool for making a qualitative
statement about the presence or absence of biological target sequences
in a sample,
for example
peptide microarrays have been used for detecting tumor biomarkers~\cite{CMI+06,MES+04,WSK+03}.
A microarray is a plastic or glass slide (2D grid-like) consisting of thousands of
short sequences called \emph{probes}.
Microarray design raises a number of challenging combinatorial problems, such as probe
selection~\cite{GL+07,KS02,LS01,SL03},
deposition sequence design~\cite{KZ+02,R03} and probe placement
and synthesis~\cite{HH+02,CR06,CR061,CR07,KM+04,KM+06}.
In this paper, we focus on the probe placement and synthesis problem.

The synthesis process~\cite{FR+91} consists of two components:
\emph{probe placement} and \emph{probe embedding}.
Probe placement is to place each probe to a unique array cell
and probe embedding is to determine a \emph{deposition sequence}
of masks 
to allow (and block) lights on the array cells
(see Figure~\ref{fig:mask}).
The deposition sequence is a supersequence of all probes.
Figure~\ref{fig:embedding} shows
the deposition sequence (ACGT)$^3$
and various embeddings of the probe CGT,
e.g., (a) shows the embedding ($-$)C($-$)$^4$G($-$)$^4$T,
where ``$-$'' represents a space.
The synthesis can be classified as
\emph{synchronous} and \emph{asynchronous} synthesis.
In the former, the $i$-th deposition character can only be
deposited to the $i$-th position of the probes.
In the later, there is no such restriction.
Figure~\ref{fig:mask} shows asynchronous synthesis.

Due to diffraction, internal reflection and scattering,
cells on the \emph{border} between masked and unmasked regions
are often subject to unintended illumination~\cite{FR+91},
and can compromise experimental results.
As microarray chip is expensive to synthesize,
unintended illumination should be minimized.
The magnitude of unintended illumination can be measured by the
\emph{border length} of the masks used,
which is the number of borders shared between masked and
unmasked regions,
e.g., in Figure~\ref{fig:mask}, the border
length of $\mask_1,\mask_3,\mask_4$ is $2$ and $\mask_2$ is~$4$.

\runtitle{Synchronous synthesis.}
Hanannenhalli et al.~\cite{HH+02} defined the
\emph{Border Minimization Problem} ($\BMP$)
for synchronous synthesis,
in which the only concern is probe placement.
Once the placement is fixed,
the border length is proportional to the Hamming distance
of neighboring probes.
Hanannenhalli et al.~\cite{HH+02} proposed an approximation algorithm
based on travelling salesman path (TSP) in the complete graph
representing the probes and their Hamming distance.
Experiments have been carried out to show the effectiveness of the algorithm.
Other algorithms~\cite{KM+04,KM+06,CR061} have been proposed to
improve the experimental results.
Recently, the problem has been proved to be NP-hard~\cite{KR09}
and $O(\sqrt{n})$-approximable~\cite{KRD10},
where $n$ is the number of probes.

\runtitle{Asynchronous synthesis.}
In this paper, we focus on
asynchronous synthesis,
which was introduced by Kahng et al.~\cite{KM+04}.
The problem appears to be difficult as
they studied a special case that
the deposition sequence is given and the embeddings of
all but one probes are known.
A polynomial time dynamic programming algorithm was proposed
to compute the optimal embedding of this single probe.
This algorithm is used as the basis for several
heuristics~\cite{CR06,CR061,CR07,KM+04,KM+06} that are
shown experimentally to reduce
unintended illumination. 
The 
dynamic programming mentioned
above computes the optimal embedding of a single probe
in time $O(\len |D|)$,
where $\len$ is the length of a probe and
$D$ is the deposition sequence.
The algorithm can be
extended to an exponential time algorithm to find the optimal
embedding of all $n$ probes
in $O(2^n \len^n|D|)$ time.

\begin{figure}[t]
  \begin{center}
 {
  \psfrag{e1}{\footnotesize{$e_1 = $}}
  \psfrag{e2}{\footnotesize{$e_2 = $}}
  \psfrag{e3}{\footnotesize{$e_3 = $}}
  \psfrag{e4}{\footnotesize{$e_4 = $}}
  \psfrag{e5}{\footnotesize{$e_5 = $}}
  \psfrag{M1}{\scriptsize{$\mask_1$}}
  \psfrag{M2}{\scriptsize{$\mask_2$}}
  \psfrag{M3}{\scriptsize{$\mask_3$}}
  \psfrag{M4}{\scriptsize{$\mask_4$}}
  \psfrag{M5}{\scriptsize{$\mask_5$}}
  \psfrag{A}{\tiny{A}}
  \psfrag{C}{\tiny{C}}
  \psfrag{G}{\tiny{G}}
  \psfrag{T}{\tiny{T}}
  \psfrag{AC}{\tiny{AC}}
  \psfrag{CA}{\tiny{CA}}
  \psfrag{CT}{\tiny{CT}}
  \psfrag{TA}{\tiny{TA}}
  \psfrag{unmasked region}{\scriptsize{unmasked region}}
  \psfrag{masked region}{\scriptsize{masked region}}

  \includegraphics[width=8cm]{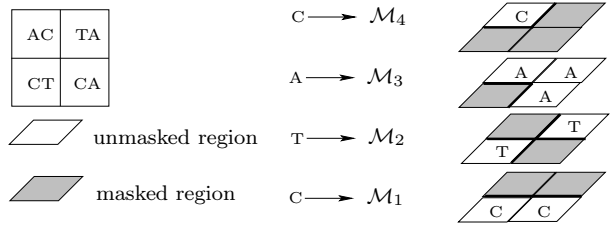}
 }
   \caption{Asynchronous synthesis of a $2 \times 2$ microarray.
The deposition sequence $D =$ CTAC
corresponds to the sequence of four masks $\mask_1$, $\mask_2$, $\mask_3$,
and $\mask_4$.
The masked regions are shaded.
The borders between the masked and unmasked regions are
represented by bold lines.
With respect to the deposition sequence $D$,
the embedding of sequence AC is $--$AC, TA is $-$TA$-$,
CT is CT$--$, CA is C$-$A$-$.}
  \label{fig:mask}
  \end{center}
  \end{figure}

\begin{figure}[t]
  \begin{center}
 {
  \psfrag{p}{\footnotesize{$p =$ CGT}}
  \psfrag{1}{\footnotesize{(a)}}
  \psfrag{2}{\footnotesize{(b)}}
  \psfrag{3}{\footnotesize{(c)}}
  \psfrag{4}{\footnotesize{(d)}}
  \psfrag{a}{\scriptsize{A}}
  \psfrag{c}{\scriptsize{C}}
  \psfrag{g}{\scriptsize{G}}
  \psfrag{t}{\scriptsize{T}}
  \psfrag{S}{{$D$}}
  \psfrag{E1}{{$\embed_1$}}
  \psfrag{E2}{{$\embed_2$}}
  \psfrag{E3}{{$\embed_3$}}
  \psfrag{E4}{{$\embed_4$}}
  \includegraphics[width=6cm]{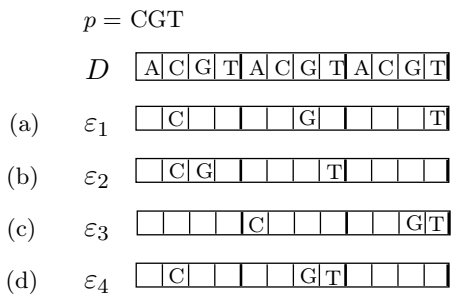}
 }
   \caption{Different embeddings of probe $p =$ CGT into deposition sequence
$D = (\mbox{ACGT})^3$.}
  \label{fig:embedding}
  \end{center}
  \end{figure}

To find both placement and embedding,
Li et al.~\cite{LWY+08} proposed
an $O(\sqrt{n} \log^2 n)$-approximation algorithm.
This is based on their $O(\log^2 n)$-approximation 
for finding embeddings when placement is given.
On a one-dimensional array,
they showed that the approximation ratio could be improved to $3/2$.
If in addition the placement is given, the problem can be solved
optimally in polynomial time.
It is however unknown whether the general problem is NP-hard or not.
We note that the NP-hard proof in~\cite{KR09} cannot be applied
to the asynchronous problem and it is not straightforward to show
direct relation between the synchronous and asynchronous problems.
This leaves several open questions in the asynchronous problem.
Let us denote by $\PBMP$ the problem with placement already given.

\begin{itemize}
\item So far only approximation algorithms for $\BMP$ have been proposed.
  An open question is whether $\BMP$ (or even $\OBMP$) is NP-hard.
\item An exponential-time algorithm for $\PBMP$ has been proposed
  while $\OPBMP$ can be solved optimally in polynomial time.
  Is $\PBMP$ (on 2D array) NP-hard?
\item Is it possible to improve existing approximation algorithms
  for $\BMP$ or $\PBMP$?
\end{itemize}

\begin{table}[t]
\begin{center}
\begin{tabular}{|c|c|c|c|c|} \hline
Setting & 2D & 1D 
\\ \hline \hline
$\BMP$ & NP-hard$^*$ & NP-hard$^*$ 
\\
 & $O(n^{1/4}\log^2 n)$-approximate$^*$ & $\frac{3}{2}$-approximate~\cite{LWY+08} 
\\ \hline
$\PBMP$ & NP-hard$^*$ & polynomial time solvable~\cite{LWY+08} 
\\
& $O(\log n)$-approximate$^*$ & 
\\ \hline
\end{tabular}
\caption{Results on $\BMP$ and $\PBMP$.
  Results in this paper are marked with an asterisk.}
\label{tab:results}
\end{center}
\end{table}

\runtitle{Our contributions.}
We give a thorough study of different variations
of the asynchronous border minimization problem.
We answer the above questions affirmatively
by giving several NP-hard proofs and better approximation algorithms.
Our contributions are listed below (see Table~\ref{tab:results} also):

\begin{itemize}
\item For $\OBMP$ (placement not given), 
  we give a reduction from the Hamiltonian Path problem~\cite{GJ90}
  to $\OBMP$, implying the NP-hardness of $\OBMP$. \\
  This means that when the array is one dimension,
  whether placement is given differentiates the complexity of $\BMP$
  (as $\OPBMP$ is polynomial time solvable~\cite{LWY+08}).
\item We further show that $\OBMP$ can be reduced to $\BMP$
  and thus, $\BMP$ is NP-hard. \\
  This means that $\BMP$ is NP-hard regardless of the dimension of the array.
\item For $\PBMP$, 
  we show that the Shortest Common Supersequence problem~\cite{RU81} can be
  reduced to $\PBMP$,
  implying that $\PBMP$ is NP-hard. \\
  This means that the dimension differentiates the complexity of $\PBMP$
  as we have seen in~\cite{LWY+08} that
  $\OPBMP$ is polynomial time solvable.
\item We also improve the approximation ratio for $\PBMP$ from $O(\log^2 n)$
  to $O(\log n)$ and $\BMP$ from $O(n^{1/2}\log^2 n)$ to $O(n^{1/4}\log^2 n)$.
\end{itemize}

\runtitle{Organization of the paper.}
In Section~\ref{sec:prelim}, we give some definitions
and preliminaries.
In Sections~\ref{sec:pbmp} and~\ref{sec:bmp},
we give the hardness results for $\PBMP$ and $\BMP$, respectively.
In Section~\ref{sec:approx},
we present and analyze an approximation algorithm for $\BMP$.
We conclude in Section~\ref{sec:conclude}.

\section{Preliminaries}
\label{sec:prelim}
We are given a set of $n$ sequences 
$\probeset = \{s_1, s_2,\ldots, s_n\}$,
a $\sqrt{n} \times \sqrt{n}$ array
(for simplicity, we assume that $\sqrt{n}$ is an integer).
For any sequence $s_i$,
we denote the length of the sequence by $\ell_i$ and
the $t$-th character of a sequence $s_i$ by $s_i[t]$.
The probe sequences in $\probeset$ are to be placed on the
$\sqrt{n} \times \sqrt{n}$ array.
We denote a cell in the array as $v$.
Two cells $v_1=(x_1, y_1)$ and $v_2=(x_2, y_2)$
are said to be {\em neighbors}
if $|x_1 - x_2| + |y_1 - y_2| = 1$.
For each cell $v$, we denote the
set of neighbors of $v$ by $\neighbor(v)$.

\runtitle{Placement and embedding.}
A {\em placement} of the probe sequences is a
bijective function $\placement$
that maps each probe sequence
to a unique cell in the array.
A deposition sequence $D$ is a sequence of characters
and each character is deposited by a mask
to some cells of the array.
A mask $\mask$ can be viewed as a 2D-array
such that $\mask(i,j)$ is either the character
associated with $\mask$ or a space ``$-$''.
The space means that the character is not deposited
in this cell.

An {\em embedding} of a set of probes $\probeset$ into
a deposition sequence $D$ is
denoted by $\embed = \{\embed_1, \embed_2, \ldots, \embed_n \}$.
For $1\leq i \leq n$,
$\embed_i$ is a length-$|D|$ sequence such that
(1) $\embed_i[t]$ is either $D[t]$ or a space $``-"$;
and (2) removing all spaces from $\embed_i$ gives $s_i$.
There are two ways to define the border length between
two probes $s_i$ and $s_j$.
The Hamming distance between $\embed_i$ and $\embed_j$
measures $\border_{\embed}(s_i,s_j)$.
With this definition, $\border_{\embed}(s_i,s_j) = \border_{\embed}(s_j,s_i)$.
We use this in Section~\ref{sec:approx}.
Sometimes, it is more convenient to define an asymmetric measurement,
$\border'_{\embed}(s_i,s_j)$
is the number of $p$'s such that
(i) $\embed_i[p] \not= \text{`$-$'}$, and
(ii) $\embed_i[p] \not= \embed_j[p]$.
Condition (ii) means that $\embed_j[p] = \text{`$-$'}$.
Note that $\border'_{\embed}(s_i,s_j) \not = \border'_{\embed}(s_j,s_i)$
while $\border_{\embed}(s_i,s_j) = \border'_{\embed}(s_i,s_j) + \border'_{\embed}(s_j,s_i)$.
We use this definition in Sections~\ref{sec:pbmp} and~\ref{sec:bmp}.

\runtitle{Border length.}
The {\em border length} of a placement $\placement$
and an embedding $\embed$ is defined as
the sum of borders over all pairs of probe sequences
\begin{equation}
 \borderlen(\placement, \embed)=
   \frac{1}{2}
   \displaystyle \sum_{\scriptsize
    \begin{array}{c}
    s_i, s_j:\\
    \placement(s_j) \in \neighbor(\placement(s_i))
    \end{array}
    } \border_{\embed}(s_i, s_j)
 = \sum_{\scriptsize
    \begin{array}{c}
    s_i, s_j:\\
    \placement(s_j) \in \neighbor(\placement(s_i))
    \end{array}
    } \border'_{\embed}(s_i, s_j)
    \, .
 \label{eq:bmpcost}
\end{equation}

We can also define border length in terms of the border length of all the masks.
For any mask $\mask$ of deposition character $X$,
the border length of $\mask$, denoted by $\borderlen(\mask)$,
is defined as
the number of neighboring cells $(i_1,j_1)$ and $(i_2,j_2)$
such that $\mask(i_1,j_1)=X$ and $\mask(i_1,j_1) \not= \mask(i_2,j_2)$.
For a placement and embedding that corresponds to
a sequence of masks $\mask_1$, $\mask_2$, $\cdots$, $\mask_d$,
\begin{equation}
 \borderlen(\placement, \embed)=
  \sum_{h=1}^{d} \borderlen(\mask_h)
 \label{eq:maskcost}
\end{equation}

The objective is to find a placement $\placement$ and
an embedding~$\embed$, so that $\borderlen(\placement, \embed)$
is minimized.

When the placement is given, we call the problem the $\PBMP$.
We also consider the $\BMP$ when the array is one dimensional
and we call the problem $\OBMP$.

\runtitle{$\WMSA$ and $\MRCT$.}
As shown in~\cite{LWY+08}, $\PBMP$ can be reduced to
the \emph{weighted multiple sequence alignment problem} ($\WMSA$),
which in turn can be reduced to the \emph{minimum routing cost tree problem} ($\MRCT$).
In the $\WMSA$ problem~\cite{BV01,FD87,G93,RL+97},
we are given $k$ sequences $S=\{S_1, S_2, \cdots, S_k\}$.
An alignment is $S'=\{S'_1, S'_2, \cdots, S'_k\}$
such that all $S'_i$ have the same length
and $S'_i$ is formed by inserting spaces into $S$.
The problem is to minimize the \emph{weighted sum-of-pair score} 
which is the weighted sum of pair-wise score of every pair of sequences
in the alignment and the pair-wise score is the sum of distance of characters
in each position of the two sequences.
In the $\MRCT$ problem~\cite{Bartal96},
we are given a graph with weighted edges.
In a spanning tree,
the routing cost between two vertices is the sum of weights of the edges
on the unique path between the two vertices in the spanning tree.
The $\MRCT$ problem is to find a spanning tree with minimum routing cost,
which is defined as the sum of routing cost
between every pair of two vertices.

The reduction results in~\cite{LWY+08} imply the following lemma.
\begin{lemma}[{\!\!\cite{LWY+08}}]
\label{thm:mrct}
If there is a $c$-approximation for $\MRCT$,
there is a $c$-approximation for $\PBMP$.
\end{lemma}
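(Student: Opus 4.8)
The plan is to obtain the lemma by composing two approximation-preserving reductions that are already implicit in the constructions of \cite{LWY+08}: $\PBMP \le \WMSA$ and $\WMSA \le \MRCT$. Recall that to transport a $c$-approximation across a reduction it suffices to give, for each instance $I$ of the source problem, a polynomial-time constructible instance $f(I)$ of the target with $\mathrm{OPT}(f(I)) \le \mathrm{OPT}(I)$, together with a polynomial-time map that turns any feasible solution of $f(I)$ of cost $C$ into a feasible solution of $I$ of cost at most $C$. Then running the $c$-approximation on $f(I)$ and pulling the solution back yields cost at most $c\cdot\mathrm{OPT}(f(I)) \le c\cdot\mathrm{OPT}(I)$. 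Since both reductions below have this shape and such reductions compose, the lemma follows.

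First I would spell out $\PBMP \to \WMSA$. Given probes $\probeset$ and a fixed placement $\placement$, form the $\WMSA$ instance with the same $n$ sequences, pair-weight $w_{ij}=1$ when $\placement(s_j)\in\neighbor(\placement(s_i))$ and $w_{ij}=0$ otherwise, and the gap-based character distance $d(a,a)=0$, $d(a,\text{`$-$'})=1$, and $d(a,b)=2$ for distinct non-gap $a,b$ (this is the distance for which the two problems coincide). An embedding $\embed$ of $\probeset$ into a deposition sequence $D$ is precisely an alignment all of whose columns are \emph{consistent}, i.e. every non-gap entry of column $t$ equals $D[t]$; for such an alignment no column is ever a substitution column, so the pairwise score of $(s_i,s_j)$ is exactly the Hamming distance of $\embed_i,\embed_j$ and the weighted sum-of-pairs score equals $\borderlen(\placement,\embed)$ by~(\ref{eq:bmpcost}). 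Conversely, an arbitrary alignment returned for the $\WMSA$ instance is turned into an embedding by replacing each mixed column by one column per distinct non-gap character occurring in it: with the chosen distances a pair that paid $d(a,b)=2$ in the mixed column pays $1+1$ across the two split columns while every other pair is unaffected, so the score is unchanged and the new columns spell a legal deposition sequence. Hence $\mathrm{OPT}(\WMSA)=\mathrm{OPT}(\PBMP)$ and the back-map is cost-preserving.

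Next I would recall the classical $\WMSA \to \MRCT$ reduction used in \cite{LWY+08}: take a graph on the $n$ sequences whose edge weight on $(i,j)$ is the optimal pairwise alignment cost of $S_i$ and $S_j$, with the pair-weights $w_{ij}$ incorporated so that the routing cost of a spanning tree equals the $w$-weighted sum of tree-path lengths over pairs. For any spanning tree $T$ one builds, by iteratively merging pairwise alignments along $T$, a multiple alignment in which the induced cost of each pair is at most its distance in $T$ — this uses the triangle inequality of the alignment metric — so its weighted sum-of-pairs score is at most the routing cost of $T$; conversely, the pairwise distances induced by the optimal alignment witness $\mathrm{OPT}(\MRCT)\le\mathrm{OPT}(\WMSA)$. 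This again gives matching optima together with a cost-non-increasing way to read a $\WMSA$ solution off an $\MRCT$ spanning tree. Composing the two reductions converts a $c$-approximate spanning tree into a $c$-approximate embedding, which proves the lemma.

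The step I expect to be the main obstacle is the column-consistency point in the first reduction: a single mask deposits only one character, so a generic $\WMSA$ alignment is not literally a legal embedding, and one must check that the gap-based distance function makes the column-splitting step free of charge (with a plain unit substitution cost it would not be). The remaining care is to confirm that the $\WMSA$–$\MRCT$ correspondence preserves the ratio \emph{exactly} rather than up to a constant, i.e. that $\mathrm{OPT}(\MRCT)\le\mathrm{OPT}(\WMSA)$ and not merely $O(\mathrm{OPT}(\WMSA))$.
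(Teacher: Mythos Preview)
The paper does not give its own proof of this lemma; it is quoted from \cite{LWY+08}. Your sketch of the first reduction ($\PBMP \leftrightarrow \WMSA$) is fine: with the gap-based distance $\delta(a,b)=2$ for distinct letters, column splitting is cost-free, so an arbitrary alignment can be turned into a consistent one (hence a legal embedding) at no extra cost, and the optima coincide.

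The second reduction, however, has a real gap exactly where you flagged it. The inequality $\mathrm{OPT}(\MRCT)\le\mathrm{OPT}(\WMSA)$ is \emph{false} in general, so an arbitrary black-box $c$-approximation for $\MRCT$ does not yield a $c$-approximation for $\WMSA$. A tiny counterexample: take three sequences $ab$, $bc$, $ac$ with unit pair-weights. All three pairwise alignment distances equal $2$, so every spanning tree is a path with routing cost $2+2+4=8$, hence $\mathrm{OPT}(\MRCT)=8$. But the alignment
\[
\begin{array}{ccc} a & b & - \\ - & b & c \\ a & - & c \end{array}
\]
has weighted $\SP$ score $6$. Thus an exact $\MRCT$ solver would only produce a $4/3$-approximate alignment. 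Your sentence ``the pairwise distances induced by the optimal alignment witness $\mathrm{OPT}(\MRCT)\le\mathrm{OPT}(\WMSA)$'' does not work: those induced scores dominate the metric edge weights and need not be realizable as distances in any spanning tree.

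What actually makes the lemma go through (and what the paper's surrounding text, as well as \cite{LWY+08}, is really invoking) is that the relevant $\MRCT$ algorithms of \cite{Bartal96,FRT03} are metric tree embeddings: they return a tree whose routing cost is at most $c\cdot\sum_{i,j}w_{ij}\,d(i,j)$, not merely $c\cdot\mathrm{OPT}(\MRCT)$. Since $\sum_{i,j}w_{ij}\,d(i,j)\le\mathrm{OPT}(\WMSA)=\mathrm{OPT}(\PBMP)$ and the tree-guided alignment costs at most the routing cost, one gets a $c$-approximation for $\PBMP$. So the lemma should be read as ``a distortion-$c$ tree embedding of the pairwise-distance metric yields a $c$-approximation for $\PBMP$'', and your proof should replace the claimed inequality between the two optima by this comparison against the common lower bound $\sum_{i,j}w_{ij}\,d(i,j)$.
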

It is stated in~\cite{LWY+08} that
Bartal's algorithm~\cite{Bartal96} finds a routing spanning tree
by embedding a metric space into a distribution of trees with expected distortion $O(\log^2 n)$.
$\MRCT$ is $O(\log^2 n)$-approximable~\cite{Bartal96}.
Meanwhile, the ratio is improved to $O(\log n)$ by Fakcharoenphol, Rao and Talwar~\cite{FRT03}.
Together with Lemma~\ref{thm:mrct}, we have the following corollary.
\begin{corollary}
\label{thm:pbmp}
There is an $O(\log n)$-approximation for the $\PBMP$.
\end{corollary}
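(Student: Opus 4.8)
The plan is essentially to chain together the three ingredients that the excerpt has already laid out: Lemma~\ref{thm:mrct}, Bartal's tree-embedding result, and the improvement of Fakcharoenphol, Rao and Talwar. First I would recall that by Lemma~\ref{thm:mrct} any $c$-approximation for $\MRCT$ yields a $c$-approximation for $\PBMP$, so it suffices to exhibit an $O(\log n)$-approximation for $\MRCT$. Then I would invoke the probabilistic embedding of an arbitrary finite metric space into a distribution over dominating tree metrics: by \cite{FRT03} every $n$-point metric $(V,d)$ admits a distribution $\mathcal{D}$ over trees $T$ spanning $V$ such that $d_T(u,v)\ge d(u,v)$ for all $u,v$ and $\mathbb{E}_{T\sim\mathcal{D}}[d_T(u,v)] = O(\log n)\, d(u,v)$, improving Bartal's original $O(\log^2 n)$ distortion \cite{Bartal96}.

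Next I would argue that this embedding gives the desired approximation for $\MRCT$ by the standard averaging argument. Given the weighted input graph, pass to its shortest-path metric $d$, sample a tree $T$ from the FRT distribution, and output the spanning tree induced by $T$ (or a spanning tree of the original graph of no greater routing cost). Since the routing cost of a tree is $\sum_{u,v} d_T(u,v)$ and each term is at most $O(\log n)$ times $d(u,v)$ in expectation, linearity of expectation gives that the expected routing cost of the sampled tree is $O(\log n)\sum_{u,v} d(u,v)$. Because every spanning tree has routing cost at least $\sum_{u,v} d(u,v)$ (the metric distances are lower bounds on any path length in a spanning subgraph), the optimal $\MRCT$ has cost at least $\sum_{u,v} d(u,v)$, so the sampled tree is within $O(\log n)$ of optimal in expectation; derandomization or simply taking the best of polynomially many samples yields the bound deterministically. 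Combining this $O(\log n)$-approximation for $\MRCT$ with Lemma~\ref{thm:mrct} gives the corollary.

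I do not expect a genuine obstacle here, since both the reduction and the metric-embedding machinery are quoted as black boxes; the only point requiring a little care is making sure the lower bound $\mathrm{OPT}_{\MRCT}\ge\sum_{u,v} d(u,v)$ is applied correctly (it follows because distances can only shrink or stay the same when passing to the metric, and in a spanning tree every pairwise path is at least the metric distance), and that the FRT distortion bound is stated in the dominating (non-contracting) form so that the averaging argument compares against a valid lower bound. Given that the paper is content to cite \cite{Bartal96} and \cite{FRT03} without reproving them, the corollary is really just the observation that $O(\log n)$ replaces $O(\log^2 n)$ throughout the chain $\MRCT \to \PBMP$.
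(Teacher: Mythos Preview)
Your proposal is correct and follows exactly the paper's approach: the corollary is obtained immediately by combining Lemma~\ref{thm:mrct} with the $O(\log n)$ tree-embedding bound of Fakcharoenphol, Rao and Talwar~\cite{FRT03}, which replaces Bartal's $O(\log^2 n)$ distortion. The paper does not spell out the averaging argument for $\MRCT$ that you include; it simply cites \cite{FRT03} as a black box, so your write-up is, if anything, more detailed than what appears there.
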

Notice that we use the term embedding in two contexts,
probe embedding refers to finding the deposition sequence
while embedding a metric to trees is to obtain an approximation.
This should be clear from the context and should not cause confusion.

\comment{

In our algorithm we use as a black box the approximation algorithm for the $\PBMP$ problem. In~\cite{LWY+08} is presented an $O(\log^2{n})$ approximation algorithm for the $\PBMP$. We show in this subsection that this algorithm can be actually improved to achieve a factor of $O(\log{n})$.

In~\cite{LWY+08} is shown a polynomial time reduction from the $\PBMP$ problem to the \emph{weighted multiple sequence alignment} ($\WMSA$). Therefore, an approximation algorithm for the $\WMSA$ problem, gives an approximation algorithm for the $\PBMP$ problem. In turn, the $\WMSA$ approximation algorithm is given via a reduction to another problem, named  \emph{minimum routing cost tree problem} ($\MRCT$)~\cite{WuLBCRT99}. The $\MRCT$ problem can be approximated to a factor of $O(\log^2{n})$~\cite{WuLBCRT99} and thus, the $\WMSA$ and $\PBMP$ problems have the same approximation ratio. We show how to improve the approximation ratio of the $\MRCT$ problem, and implicitly the approximation ratio for the $\PBMP$ problem. First, we present the definitions of the two problems:

\begin{definition}[$\WMSA$]
Let $\Sigma$ be the set of characters and $S = \{S_1 , S_2 , \dots, S_k\}$ be a set of $k$ sequences, with maximum length $m$, over $\Sigma$. An alignment of $S$ is a matrix $S = \{S'_1 , S'_2 , \dots, S'_k\}$ such that $|S'_i| = m'$ and $S'_i$ is formed by inserting spaces into $S_i$. For a given distance function $\delta (a, b)$ where $a, b \in \Sigma \cup \{-\}$, the pairwise score of $S'_i$ and $S'_j$ is defined as $\sum_{1 \le y \le m'} \delta (S'_i [y ], S'_j [y ])$. Given a weight function $w(i, j )$ for the pair of sequences $S_i$ and $S_j$ , the weighted sum-of-pair (SP) score $SP(S' , w) = \sum_{1 \le i,j \le k} w(i,j) \sum_{1 \le y \le m'} \delta (S'_i [y ], S'_j [y ])$. The $\WMSA$ problem is to find an alignment $S'$ such that $SP(S' , w)$ is minimized.

\end{definition}

\begin{definition}[$\MRCT$]
A graph with weighted edges is given. For a spanning tree of the graph, the routing cost between two
vertices is the sum of weights of the edges on the unique path between the two vertices
in the spanning tree. The routing cost of the spanning tree is defined as the sum of routing cost between every pair of two vertices. The $\MRCT$ problem is to find a spanning tree whose routing cost is minimum.
\end{definition}

In the reduction from~\cite{WuLBCRT99}, each sequence in the input of $\WMSA$ corresponds to a vertex in the input graph of $\MRCT$. The edge weight between two vertices is set to be the weighted edit distance between the two corresponding sequences. The reduction result states that:
\begin{enumerate}
\item There is a routing spanning tree $T$ whose routing cost is at most $O(\log^2{n})$ times $\sum_{i,j} w(i, j )d(i, j )$, where $d(i, j )$ is the edit distance between the two sequences $i$ and $j$.
\item There is an alignment $S'$ whose $SP(S', w)$ is at most the routing cost of $T$. Note that $\sum_{i,j} w(i, j )d(i, j )$ is a lower bound on the weighted $SP$ score.
\end{enumerate}

The routing spanning tree $T$ is found using Bartal's~\cite{Bartal96} algorithm which embeds a metric space into a distribution of trees with expected distortion $O(\log^2{n})$. Meanwhile, this result was improved by Fakcharoenphol, Rao and Talwar~\cite{FRT03} to $O(\log{n})$.

Therefore, the following lemma follows.

\begin{lemma}
There is an $O(\log{n})$-approximation algorithm for the $\WMSA$ problem, where $n$ is the number of sequences to be aligned.
\end{lemma}
\begin{corollary}
There is an $O(\log{n})$-approximation algorithm for the $\PBMP$ problem.
\end{corollary}
}

\section{$\PBMP$: Finding embedding when placement is given}
\label{sec:pbmp}

We give a reduction from
the Shortest Common Supersequence (SCS) problem
to the $\PBMP$.

\runtitle{Shortest Common Supersequence problem.}
Given $n$ sequences of characters,
a common supersequence is a sequence that contains
all the $n$ sequences as subsequences.
The Shortest Common Supersequence problem is to
find a common supersequence with the minimum length.

The reduction is from the SCS problem over binary alphabet,
which is known to be NP-hard~\cite{RU81}.
Suppose that the binary alphabet is $\{0,1\}$.
Consider an instance of the SCS problem
with a set $S$ of $k$ binary strings $s_1, \cdots, s_k$.
Let $\ell_i$ be the length of $s_i$,
$\ell = \max_{1 \leq i \leq k} \ell_i$
be the length of the longest sequence in $S$,
and $L = \sum_{1 \leq i \leq k} \ell_i$.
For any $1 \leq p, q \leq \ell$,
we define an instance for $\PBMP$,
denoted by $I(p,q)$.
As we show later,
a shortest common supersequence can be found by
computing the optimal solutions for a polynomial number
of instances $I(p,q)$.

\runtitle{The input $I(p,q)$.}
We construct a $(2k+1) \times (2k+1)$ array.
The probe sequences are over the alphabet $\{0,1,\$\}$,
where $\$$ is a character different from 0 or 1.

\begin{itemize}
\item Except for row 2-4, each cell of row 1, 5, 6, 7, 8, $\cdots$
  of the array contains the string ``\$''.
  We call these rows \emph{dummy-rows}.
\item All the cells of row 2 contain the same string ``$0^p$''.
  We call this row \emph{all-0-row}.
\item All the cells of row 4 contain the same string ``$1^q$''.
  We call this row \emph{all-1-row}.
\item Row 3 contains $s_1$, $s_2$, $\cdots$, $s_k$ in alternate cells,
  and the rest of the cells contain the string ``\$'',
  precisely, row 3 contains ``\$'', $s_1$, ``\$'', $s_2$, ``\$'', $\cdots$, ``\$'', $s_k$, ``\$''.
  We call this row \emph{seq-row}.
\end{itemize}

Tables~\ref{tab:cs} and~\ref{tab:scs} show
examples of $I(3,3)$
and $I(1,1)$, respectively.

\runtitle{Common supersequence and deposition sequence.}
Consider an instance $I(p,q)$,
we need at least one mask for the dummy strings ``\$'',
and the best is to use exactly one mask, say $M_{\$}$ for all these strings.
For $M_{\$}$,
row 1 (dummy-row) incurs a border length of $2k+1$ on the bottom boundary with all-0-row,
and row 5 (dummy-row) incurs $2k+1$ on the top boundary with all-1-row.
For seq-row, the border length on top boundary with all-0-row is $k+1$,
on bottom boundary with all-1-row is also $k+1$,
and within seq-row on left and right boundaries is $2k$.
Therefore,
the border length $\bl(M_{\$}) = 4(2k+1)$.
The total border length for $I(p,q)$
equals to $\bl(M_{\$})$ plus that of the remaining deposition sequence,
which in turn is related to a common supersequence
of the sequences in $S$.
Since the quantity $\bl(M_{\$})$ is present in all the embeddings,
we ignore this quantity when we discuss the border length for $I(p,q)$.
The following lemma states a relationship between
a common supersequence and an embedding of the probe sequences.
Table~\ref{tab:cs} gives an example.

\begin{table}[t]
\begin{center}
\begin{tabular}{|c|c|c|c|c|c|c|}
\hline
 \quad \$\quad\,  & \quad \$\quad\,  & \quad \$\quad\,  & \quad \$\quad\,  & \quad \$\quad\,  & \quad \$\quad\,  & \quad \$\quad\,  \\
\hline
 000 & 000 & 000 & 000 & 000 & 000 & 000 \\
\hline
 \$ & 010 & \$ & 100 & \$ & 00 & \$ \\
\hline
 111 & 111 & 111 & 111 & 111 & 111 & 111 \\
\hline
 \$ & \$ & \$ & \$ & \$ & \$ & \$ \\
\hline
 \$ & \$ & \$ & \$ & \$ & \$ & \$ \\
\hline
\$ & \$ & \$ & \$ & \$ & \$ & \$ \\
\hline
\end{tabular}
\end{center}
\caption{$s_1 = $ ``010'', $s_2 = $ ``100'', $s_3 = $``00''.
The supersequence $D=$ ``010011'' 
is an optimal deposition sequence for $I(3,3)$.
Ignoring the border of the mask for the dummy strings ``\$'',
the optimal border length equals
$2(p^*+q^*) \times (2k + 1) + 2L = 100$,
where $p^*=q^*=k=3$ and $L=8$.
}
\label{tab:cs}
\end{table}

\begin{lemma}
\label{thm:superseqDepositPos}
If $D$ is a common supersequence of the sequences in $S$
and the number of $0$'s and $1$'s in $D$ is $p^*$ and $q^*$, respectively,
then $D$ is an optimal deposition sequence for $I(p^*,q^*)$
and the resulting optimal embedding has a border length of
$2(p^*+q^*)(2k+1) + 2L$.
\end{lemma}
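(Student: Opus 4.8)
The plan is to pin down the border length of $I(p^*,q^*)$ exactly by decomposing it column by column and between adjacent columns, using that every pairwise alignment cost arising in the instance, apart from those involving the strings $s_i$, is forced by the construction. First I would observe that, since the probe set of $I(p^*,q^*)$ contains the one-character string ``\$'' while $D$ is a word over $\{0,1\}$, any deposition sequence for $I(p^*,q^*)$ must contain at least one ``\$''. Let $D'$ be $D$ with a single ``\$'' appended; as $D$ is a common supersequence of $S$ with $p^*$ zeros and $q^*$ ones, $D'$ is a legal deposition sequence for $I(p^*,q^*)$, the embeddings of ``$0^{p^*}$'' (using all $p^*$ zero-positions), of ``$1^{q^*}$'' (using all $q^*$ one-positions) and of the ``\$''-probe into $D'$ are uniquely determined, and only the $s_i$'s have any freedom. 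I will use repeatedly that the border length is the sum over all grid edges of $\border_{\embed}$ of the embeddings in the two incident cells, split into the $2k$ vertical edges inside each of the $2k+1$ columns and the $2k$ horizontal edges between each of the $2k$ pairs of adjacent columns. Write $z_i,o_i$ for the number of zeros and ones of $s_i$, so $z_i+o_i=\ell_i$, $\sum_i\ell_i=L$, and $p^*\ge z_i$, $q^*\ge o_i$ (since $D$ is a supersequence of $s_i$).

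\textbf{Lower bound.} I would record the forced pairwise costs, which hold for \emph{every} deposition sequence for $I(p^*,q^*)$ and \emph{every} embedding: a ``\$''-probe occupies exactly one position, whence $\border_{\embed}(\$,0^{p^*})=1+p^*$, $\border_{\embed}(\$,1^{q^*})=1+q^*$, $\border_{\embed}(\$,s_i)=1+\ell_i$; and ``$0^{p^*}$'' lives on $0$-positions while ``$1^{q^*}$'' lives on $1$-positions, so $\border_{\embed}(0^{p^*},1^{q^*})=p^*+q^*$. The only inequality needing an argument is
\[
\border_{\embed}(0^{p^*},s_i)+\border_{\embed}(s_i,1^{q^*})\ \ge\ p^*+q^*,
\]
which follows by writing $\border_{\embed}(0^{p^*},s_i)=p^*+\ell_i-2\beta_i$ and $\border_{\embed}(s_i,1^{q^*})=q^*+\ell_i-2\gamma_i$, where $\beta_i$ (resp.\ $\gamma_i$) is the number of $s_i$'s $0$-positions (resp.\ $1$-positions) reused by the embedding of ``$0^{p^*}$'' (resp.\ ``$1^{q^*}$''), and then using $\beta_i\le z_i$, $\gamma_i\le o_i$, $z_i+o_i=\ell_i$. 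Summing over the grid: each of the $k+1$ columns whose seq-row cell is ``\$'' has vertical cost $\ge 2(1+p^*)+2(1+q^*)$; each of the $k$ columns whose seq-row cell is some $s_i$ has vertical cost $\ge(1+p^*)+(1+q^*)+\bigl(\border_{\embed}(0^{p^*},s_i)+\border_{\embed}(s_i,1^{q^*})\bigr)\ge(1+p^*)+(1+q^*)+(p^*+q^*)$; the remaining (dummy-with-dummy) vertical edges contribute at least $0$; and each of the $2k$ adjacent-column pairs, always a ``\$''-column beside an $s_i$-column, has horizontal cost at least $1+\ell_i$ from the seq-row. Adding these up, the total is at least $4(2k+1)+2(p^*+q^*)(2k+1)+2L$.

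\textbf{Achievability and conclusion.} Then I would substitute $D'=D\$$ and let every dummy cell, every cell of the all-0-row and every cell of the all-1-row take its unique embedding. As $D'$ has exactly $p^*$ zeros and $q^*$ ones, ``$0^{p^*}$'' occupies all zero-positions and ``$1^{q^*}$'' all one-positions, forcing $\beta_i=z_i$ and $\gamma_i=o_i$ for every $i$ regardless of how the $s_i$ are embedded; so every inequality above is tight and the column tally evaluates to exactly $4(2k+1)+2(p^*+q^*)(2k+1)+2L$. The summand $4(2k+1)$ is $\bl(M_{\$})$, the cost of the one ``\$''-mask, present in every deposition sequence and discounted by the statement; removing it leaves $2(p^*+q^*)(2k+1)+2L$. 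Matching this with the lower bound shows that $D$ (with the ``\$''-mask) is an optimal deposition sequence for $I(p^*,q^*)$ and that the associated optimal embedding has the stated border length; the computation incidentally shows that, for this $D$, all embeddings cost the same.

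\textbf{Main obstacle.} I expect the lower bound to be where care is needed: a competing deposition sequence may insert extra ``\$''s or extra $0$s and $1$s, and one must check that this never lets the ``\$''-, ``$0^{p^*}$''- and ``$1^{q^*}$''-probes align more cheaply with their neighbours (it cannot, since each is confined to a single symbol class, so the relevant position sets stay disjoint), and that splitting the ``\$''-probes over several masks or choosing unequal embeddings for equal probes can only increase the total --- this is exactly what legitimises the reduction to a single ``\$''-mask made in the paragraph preceding the lemma. The remainder is bookkeeping; the single real idea is the inequality $\border_{\embed}(0^{p^*},s_i)+\border_{\embed}(s_i,1^{q^*})\ge p^*+q^*$, which is precisely why $I(p^*,q^*)$ ``wants'' its deposition sequence to be a common supersequence of $S$.
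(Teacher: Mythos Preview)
Your proof is correct. It differs from the paper's in how the border length is decomposed: the paper works mask-by-mask (temporally), computing for each $0$-mask and each $1$-mask the border contributed along the row boundaries, whereas you work edge-by-edge (spatially), tallying the pairwise Hamming cost down each column and across the seq-row. Your central inequality $\border_{\embed}(0^{p^*},s_i)+\border_{\embed}(s_i,1^{q^*})\ge p^*+q^*$ is exactly the edge-based counterpart of the paper's observation that, in any $0$-mask, the combined border across the row-2/row-3 and row-3/row-4 boundaries is $2k+1$ independently of how many seq-row cells are unmasked (the ``lost'' border on one side reappears on the other). The paper's lower-bound paragraph is terse about ranging over arbitrary deposition sequences; your argument is more explicit here, since the equalities $\border_{\embed}(\$,0^{p^*})=1+p^*$ etc.\ hold for any embedding into any deposition sequence, and the key inequality is likewise embedding-independent. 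You also handle the ``\$''-mask bookkeeping more carefully by making $D'=D\$$ explicit and then identifying the summand $4(2k+1)$ with $\bl(M_\$)$, whereas the paper factors this out informally in the paragraph preceding the lemma. Both routes reach the same bound with the same amount of work; yours has the minor advantage that the lower bound is visibly uniform over all deposition sequences.
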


\begin{proof}
First of all, it is easy to observe that $D$ is a deposition sequence
for $I(p^*,q^*)$ because it is a common supersequence of sequences in $S$
and it has the same number of $0$'s and $1$'s in all-0-row and all-1-row
of the array in $I(p^*,q^*)$, respectively.
Notice that $p^*$ is at least the number of $0$'s in each of $s_i$
and similarly $q^*$ is at least the number of $1$'s.
In the deposition sequence $D$,
when $D[j]=0$, all-0-row incurs a border length of $2k+1$
on the top boundary with row 1 (dummy-row);
all-0-row and seq-row together incur a border length of $2k+1$
on the bottom boundary;
and a border length of $2x$ within seq-row,
where $x$ is the number of cells on seq-row that 0 is deposited.
A similar calculation can be done for the case when $D[j]=1$.
As a whole, the total border length equals
$2(p^*+q^*)(2k+1) + 2L$.

We further argue that this is the minimum border length for $I(p^*,q^*)$.
In any deposition sequence, the number of 0's is at least $p^*$
and the number of 1's is at least $q^*$.
Therefore,
all-0-row and
the cells with `0' on seq-row together incur a border length at least $2p^*(2k+1)$,
and similarly, all-1-row and
the cells with `1' on seq-row incur at least $2q^*(2k+1)$.
The cell on seq-row with the sequence~$s_i$ incurs $2\ell_i$ on
the left and right boundaries,
implying all these cells together incur $2L$.
Therefore, no matter how we deposit characters to the cell,
the total border length is at least $2(p^*+q^*)(2k+1)+2L$.
\end{proof}

Lemma~\ref{thm:superseqDepositPos} implies that
if $p+q$ is large enough,
we have a formula for the optimal border length
of the instance $I(p,q)$
in terms of $p$, $q$, and $L$.
The following lemma considers the situation when
$p+q$ is small.
Table~\ref{tab:scs} gives an example.

\begin{table}[t]
\begin{center}
\begin{tabular}{|c|c|c|c|c|c|c|}
\hline
 \quad \$\quad\, & \quad \$\quad\,  & \quad \$\quad\,  & \quad \$\quad\,  & \quad \$\quad\,  & \quad \$\quad\,  & \quad \$\quad\,  \\
\hline
 0 & 0 & 0 & 0 & 0 & 0 & 0 \\
\hline
 \$ & 010 & \$ & 100 & \$ & 00 & \$ \\
\hline
 1 & 1 & 1 & 1 & 1 & 1 & 1 \\
\hline
 \$ & \$ & \$ & \$ & \$ & \$ & \$ \\
\hline
 \$ & \$ & \$ & \$ & \$ & \$ & \$ \\
\hline
\$ & \$ & \$ & \$ & \$ & \$ & \$ \\
\hline
\end{tabular}
\end{center}
\caption{$s_1 = $ ``010'', $s_2 = $ ``100'', $s_3 = $``00''.
The shortest common supersequence is $D=$ ``0100''.
The optimal deposition for $I(1,1)$ is $D$.
Ignoring the border of the mask for the dummy strings ``\$'',
the optimal border length equals to
$(2\times 7 + 2\times 7 + 2 \times 3 + 2\times 2) + 2\times 8= 54$
(the first four terms refer to border length with top and bottom boundaries
and the last term with left and right boundaries).
On the other hand, $2(p^*+q^*) \times (2k + 1) + 2L = 44 < 54$,
where $p^*=q^*=1$, $k=3$ and $L=8$.
}
\label{tab:scs}
\end{table}

\begin{lemma}
\label{thm:superseqDepositNeg}
If $D$ is a shortest common supersequence of the sequences in $S$
and the number of $0$'s and $1$'s in $D$ is $p^*$ and $q^*$, respectively,
then for any $p_1, q_1$ such that
$p_1+q_1 < p^*+q^*$,
the optimal embedding for $I(p_1,q_1)$ has a border length
greater than
$2(p_1+q_1)(2k+1) + 2L$.
\end{lemma}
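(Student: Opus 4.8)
The plan is to argue by contradiction: suppose some $p_1,q_1$ with $p_1+q_1 < p^*+q^*$ admits an embedding for $I(p_1,q_1)$ with border length at most $2(p_1+q_1)(2k+1)+2L$. I would first observe that any valid deposition sequence $D_1$ for $I(p_1,q_1)$ must, by the definition of embedding, contain each $s_i$ as a subsequence (the cell on seq-row holding $s_i$ must be embeddable into $D_1$), so $D_1$ is a common supersequence of $S$. In particular $|D_1|\ge |D| = p^*+q^*$. Now I count $0$'s and $1$'s in $D_1$: since each cell of all-0-row holds $0^{p_1}$, the number of $0$'s in $D_1$ is at least $p_1$, and likewise the number of $1$'s is at least $q_1$. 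The key point is that $D_1$ may also contain extra $0$'s, $1$'s, or (crucially) $\$$ characters beyond what is forced.

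The next step is a lower-bound computation on the border length of $D_1$, mirroring the argument in the proof of Lemma~\ref{thm:superseqDepositPos} but tracking the slack. Each mask in $D_1$ for character $0$ contributes $2k+1$ on the all-0-row's top boundary with the dummy row above, $2k+1$ on its bottom boundary region with seq-row, plus $2x$ within seq-row where $x$ counts cells receiving that $0$; the analogous statement holds for each $1$-mask. Summing over the first $p_1$ forced $0$-masks and $q_1$ forced $1$-masks and over all seq-row cells gives exactly $2(p_1+q_1)(2k+1)+2L$, so by the assumed bound there is essentially \emph{no slack}: $D_1$ can contain no additional $0$-mask, no additional $1$-mask, and no $\$$-character beyond the single $M_\$$ already accounted for. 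Hence $D_1$ consists of exactly $p_1$ zeros and $q_1$ ones (ignoring the $\$$'s that form $M_\$$), so $|D_1| = p_1+q_1 < p^*+q^* = |D|$, contradicting that $D_1$ is a common supersequence of $S$ and $D$ is a \emph{shortest} one.

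I would organize the write-up as: (i) "the deposition sequence is a common supersequence" (immediate from the embedding definition applied to seq-row cells); (ii) "a valid deposition has $\ge p_1$ zeros and $\ge q_1$ ones" (from all-0-row and all-1-row); (iii) the border-length accounting, phrased so that every $0$-mask costs $\ge 2(2k+1)$ on the two horizontal boundaries and every $1$-mask likewise, plus $\ge 2L$ total inside seq-row, plus any $\$$-mask beyond $M_\$$ costs strictly more; (iv) conclude that meeting the bound $2(p_1+q_1)(2k+1)+2L$ forces exactly $p_1$ zeros and $q_1$ ones, so the length is $p_1+q_1 < p^*+q^*$, contradicting minimality of $|D|$.

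The main obstacle I anticipate is making the border-accounting in step (iii) rigorous while being careful about two subtleties: first, that extra $\$$-masks (or $\$$'s interleaved into the deposition sequence) genuinely cannot help — one must check that inserting a $\$$-mask between two $0$-masks, say, never decreases total border length, which follows because the dummy rows and $M_\$$ can be reorganized so that one mask $M_\$$ covering all $\$$-cells is optimal, exactly as argued in the paragraph preceding Lemma~\ref{thm:superseqDepositPos}; and second, that the "$+2L$" term inside seq-row is a true lower bound regardless of how the $0$'s and $1$'s of each $s_i$ are distributed among masks — this is just the observation that the cell holding $s_i$ has left and right neighbors that are $\$$-cells, so it contributes $2\ell_i$ in total to the horizontal boundaries no matter the embedding. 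Once these are pinned down, the contradiction with shortestness is immediate.
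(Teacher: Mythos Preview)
Your overall strategy matches the paper's: both rest on the fact that any deposition sequence for $I(p_1,q_1)$ must be a common supersequence of $S$, hence has length $\ell_D\ge p^*+q^*>p_1+q_1$, and this excess length is what forces extra border. The paper argues directly (it writes border length $=2\ell_D k+2(p_1+q_1)(k+1)+2L$ and uses $\ell_D>p_1+q_1$); you argue the contrapositive. So the route is essentially the same.

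There is, however, a genuine gap in your step (iii). The claim that ``every $0$-mask costs $\ge 2(2k+1)$ on the two horizontal boundaries'' is false, and the notion of ``the first $p_1$ forced $0$-masks'' is not well-defined. Each cell of the all-$0$-row holds $0^{p_1}$ and may embed into \emph{any} $p_1$ of the $0$-positions of $D_1$; different cells in that row may choose different $p_1$-subsets. Consequently a given $0$-mask need not have all $2k+1$ row-2 cells unmasked, and its contribution to the row~1/row~2 boundary can be far less than $2k+1$. Your conclusion that meeting the bound forces $D_1$ itself to contain exactly $p_1$ zeros and $q_1$ ones therefore does not follow from the accounting you sketch. (The paper's own formula suffers from the same lack of justification, incidentally.)

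The repair is to account column by column using the asymmetric $\border'$. One checks that in every column the four vertical boundaries (rows $1/2$, $2/3$, $3/4$, $4/5$) sum to at least $2(p_1+q_1)$, with equality in an $s_i$-column only when that $s_i$'s embedding uses $0$-positions contained in its row-2 neighbour's chosen $p_1$ positions and $1$-positions contained in its row-4 neighbour's chosen $q_1$ positions. Combined with the within-row-2 and within-row-4 borders being zero (which forces all row-2 cells to share one set of $p_1$ zero-positions, and likewise for row 4), equality would mean every $s_i$ embeds into a fixed length-$(p_1+q_1)$ subsequence of $D_1$ --- a common supersequence shorter than $p^*+q^*$, the contradiction you want.
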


\begin{proof}
Notice that any deposition sequence must be a common supersequence,
and thus must have total length $\ell_D \geq p^*+q^* > p_1+q_1$.
With this deposition sequence,
the border length equals to $2\ell_D k + 2(p_1+q_1)(k+1) + 2L
> 2(p_1+q_1) k + 2(p_1+q_1)(k+1) + 2L
= 2(p_1+q_1)(2k+1)+2L$.
The term $2(p_1+q_1) k$ refers to the top and bottom border length
for columns with $s_i$ in the seq-row
while the term $2(p_1+q_1)(k+1)$ is for columns with dummy string ``\$''
in the seq-row.
\end{proof}

Using Lemmas~\ref{thm:superseqDepositPos} and~\ref{thm:superseqDepositNeg},
we can find the optimal solution for SCS from optimal solutions for $\PBMP$
as follows.
For all pairs of $1 \leq p \leq \ell$ and $1\leq q \leq \ell$,
we find the optimal solution to $I(p,q)$.
If the border length of the optimal solution equals to
$2(p+q)(2k+1) + 2L$,
there is a common supersequence of length $p+q$.
Among all such pairs of $p$ and $q$,
those with the minimum $p+q$ correspond to shortest common supersequences.
Notice that there are a polynomial number of, precisely $\ell^2$, pairs
of $p$ and $q$ to be checked.
We then have the following theorem.

\begin{theorem}
\label{thm:PBMP}
The $\PBMP$ is NP-Hard.
\end{theorem}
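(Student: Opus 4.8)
The plan is to assemble the pieces already developed in this section into a polynomial-time Turing reduction from the NP-hard Shortest Common Supersequence problem over a binary alphabet to $\PBMP$. Given an SCS instance $S = \{s_1,\ldots,s_k\}$ with longest-sequence length $\len$ and $L = \sum_i \len_i$, I would, for every pair $(p,q)$ with $1 \leq p,q \leq \len$, construct the $\PBMP$ instance $I(p,q)$ described above (the $(2k+1)\times(2k+1)$ array with dummy-rows, all-0-row, all-1-row, and seq-row) and invoke the assumed optimal $\PBMP$ solver on it. Since there are only $\len^2$ pairs and each $I(p,q)$ has size polynomial in the input, this is a polynomial-time reduction; hence NP-hardness of $\PBMP$ follows once correctness is established.

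For correctness, I would argue that the length of a shortest common supersequence of $S$ equals the minimum value of $p+q$ over all pairs $(p,q)$ for which the optimal border length of $I(p,q)$ (ignoring the fixed quantity $\bl(M_\$)$ contributed by the mask for the dummy strings) equals exactly $2(p+q)(2k+1) + 2L$. The two directions come directly from the lemmas in the excerpt. If $D^*$ is a shortest common supersequence with $p^*$ zeros and $q^*$ ones, then Lemma~\ref{thm:superseqDepositPos} applied to $I(p^*,q^*)$ shows the optimal border length there is exactly $2(p^*+q^*)(2k+1) + 2L$, so the pair $(p^*,q^*)$ is detected by the solver and witnesses a common supersequence of length $|D^*| = p^*+q^*$. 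Conversely, suppose some pair $(p_1,q_1)$ with $p_1+q_1 < p^*+q^*$ achieved optimal border length $2(p_1+q_1)(2k+1)+2L$; Lemma~\ref{thm:superseqDepositNeg} states precisely that this is impossible, since every deposition sequence for $I(p_1,q_1)$ is a common supersequence and therefore has length at least $p^*+q^*$, forcing the border length strictly above $2(p_1+q_1)(2k+1)+2L$. Thus no pair with $p+q$ smaller than the SCS length can be ``good'', and the minimum-$(p+q)$ good pair gives exactly the SCS length; from a good pair $(p,q)$ one recovers an actual supersequence of that length by reading off the deposition sequence produced by the solver (the number of $0$'s is $p$ and the number of $1$'s is $q$, and it must be a common supersequence by definition of a valid embedding).

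A couple of routine points need to be checked along the way. First, one should confirm that $\ell$, not a larger bound, suffices as the search range: a shortest common supersequence has length at most $L \le k\ell$, but its number of $0$'s is at most $\ell$ and its number of $1$'s is at most $\ell$ as well — wait, this needs care, since a supersequence may contain up to $\sum_i (\text{zeros in } s_i)$ zeros in the worst naive concatenation, but the \emph{shortest} one has at most $\ell$ zeros and at most $\ell$ ones because it need only accommodate the maximum count of each symbol across the $s_i$'s; I would state this explicitly so the $\ell^2$ bound on the number of instances is justified. Second, the quantity $\bl(M_\$) = 4(2k+1)$ is the same additive constant in every instance, so comparing ``optimal border length minus $\bl(M_\$)$'' against the target formula is legitimate, and I would note that the solver's returned value, after subtracting this known constant, is what we test.

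The main obstacle is not any single hard step but ensuring the bookkeeping is airtight: precisely that the only way an optimal $\PBMP$ solution for $I(p,q)$ can reach the target value $2(p+q)(2k+1)+2L$ is via a deposition sequence that is genuinely a common supersequence with exactly $p$ zeros and $q$ ones — in other words, that Lemma~\ref{thm:superseqDepositNeg} really does rule out all ``short'' pairs and that Lemma~\ref{thm:superseqDepositPos}'s optimality claim (that no embedding beats the target) is used in the right direction. Since both lemmas are already proved, the theorem's proof is essentially the two-paragraph synthesis above; I would keep it short.
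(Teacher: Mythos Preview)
Your proposal is correct and follows the paper's proof exactly: a polynomial-time Turing reduction from binary SCS that constructs the instances $I(p,q)$, queries a $\PBMP$ oracle on each, and uses Lemmas~\ref{thm:superseqDepositPos} and~\ref{thm:superseqDepositNeg} to read off the SCS length as the minimum $p+q$ achieving the target border value $2(p+q)(2k+1)+2L$. One small caution on the point you already flagged: your justification that a shortest supersequence has at most $\ell$ of each symbol ``because it need only accommodate the maximum count of each symbol across the $s_i$'s'' is not valid as stated (e.g.\ the SCS of $\{01,10\}$ is $010$ or $101$, with two copies of one symbol though each input has only one of each); but since $(01)^\ell$ is always a common supersequence you have $|D^*|\le 2\ell$, so ranging over $0\le p,q\le 2\ell$ (or even $0\le p,q\le L$) still yields polynomially many instances and the reduction goes through---the paper itself does not justify its bound $p,q\le\ell$ either.
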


\section{$\BMP$: Finding placement and embedding}
\label{sec:bmp}

We first give a reduction from the Hamiltonian Path problem to $\OBMP$
(Section~\ref{sec:1dbmp})
and then a reduction from $\OBMP$ to $\BMP$ (Section~\ref{sec:2dbmp}).

\subsection{$\OBMP$: $\BMP$ on a 1D array}
\label{sec:1dbmp}

\runtitle{Hamiltonian Path problem.}
In an undirected graph a Hamiltonian Path is a path which visits each vertex exactly once.
Given an undirected graph, the problem is to decide if a Hamiltonian Path exists.
It is known that the problem is NP-hard~\cite{GJ90}.

\runtitle{Constructing an instance of $\OBMP$ from an instance of Hamiltonian Path problem.}
Consider an Hamiltonian Path instance in which
the given graph is $G=(V,E)$,
with $|V| = n$ and $|E| = m$.
Suppose that the vertices are labelled as $V = \{1, 2, \cdots, n \}$.
For any edge between vertex $i$ and $j$ with $i < j$,
we label the edge as $e_{i,j}$.
We construct an instance $I$ of $\OBMP$ of $n+2$ sequences to be placed
on an array of size $1 \times (n+2)$.
The size of the alphabet is $m+2$.
We now define the alphabet $\Sigma$ and the probe sequences $S$.
See Figure~\ref{fig:graph} and Table~\ref{tab:graph} for an example.

\begin{figure}[t]
  \begin{center}
 {
  \psfrag{v1}{\footnotesize{$v_1$}}
  \psfrag{v2}{\footnotesize{$v_2$}}
  \psfrag{v3}{\footnotesize{$v_3$}}
  \psfrag{v4}{\footnotesize{$v_4$}}
  \psfrag{v5}{\footnotesize{$v_5$}}
  \psfrag{v1e}{\footnotesize{$v_1:\quad e_{1,2}\, e_{1,5}$}}
  \psfrag{v2e}{\footnotesize{$v_2:\quad e_{1,2} \, e_{2,3} \, e_{2,4}$}}
  \psfrag{v3e}{\footnotesize{$v_3:\quad e_{2,3} \, e_{3,4}$}}
  \psfrag{v4e}{\footnotesize{$v_4:\quad e_{2,4} \, e_{3,4} \, e_{4,5}$}}
  \psfrag{v5e}{\footnotesize{$v_5:\quad e_{1,5} \, e_{4,5}$}}
  \includegraphics[height=2.5cm]{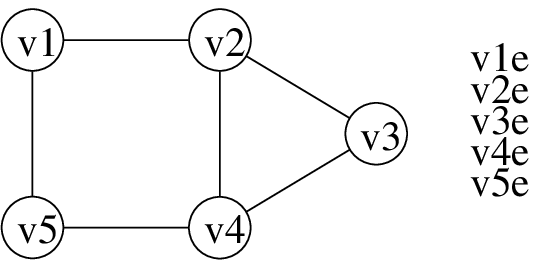}
 }
   \caption{A given graph $G$ of the Hamiltonian Path problem
    and the corresponding vertex sequences.}
  \label{fig:graph}
  \end{center}
\end{figure}

\begin{table}[t]
\begin{center}
\begin{tabular}{|c|c|c|c|c|c|c|}
\hline
\,\$\$\$\$\$\$\, &
\,$e_{1,2} \, e_{1,5}$\, &
\,$e_{1,2} \, e_{2,3} \, e_{2,4}$\, &
\,$e_{2,3} \, e_{3,4}$\, &
\,$e_{2,4} \, e_{3,4} \, e_{4,5}$\, &
$e_{1,5} \, e_{4,5}$ &
\,\#\#\#\#\#\#\,
\\ \hline
\multicolumn{7}{c}{}
\\ 
\multicolumn{7}{c}{}
\\ \hline
\$\$\$\$\$\$ &
$e_{1,2} \, e_{1,5}$ &
$e_{1,2} \, e_{2,3} \, e_{2,4}$ &
$e_{2,3} \, e_{3,4}$ &
$e_{1,5} \, e_{4,5}$ &
\,$e_{2,4} \, e_{3,4} \, e_{4,5}$ \, &
\#\#\#\#\#\#
 \\ \hline
\end{tabular}
\end{center}
\caption{Referring to the graph $G$ in Figure~\ref{fig:graph},
the placement on the top corresponds to vertex sequences
in the order of a Hamiltonian path $v_1, v_2, v_3, v_4, v_5$,
while the bottom one shows the order $v_1, v_2, v_3, v_5, v_4$,
which is not a Hamiltonian path.
For the placement on the top, we can deposit the edge characters
as $e_{1,2}, e_{2,3}, e_{2,4}, e_{3,4}, e_{1,5}, e_{4,5}$
and it can be seen that there is sharing for four edges
in the Hamiltonian path $e_{1,2}, e_{2,3}, e_{3,4}, e_{4,5}$.
For the placement on the bottom, we can deposit using the same sequence,
but there is sharing only for three edges,
namely, $e_{1,2}, e_{2,3}, e_{4,5}$.
}
\label{tab:graph}
\end{table}

\begin{enumerate}
\item Alphabet:
  $\Sigma = \{ e_{i,j} \mid e_{i,j} \in E \} \cup \{\$, \#\}$.
  We define an order on these characters such that
  $e_{i_1,j_1} < e_{i_2,j_2}$ if (i) $i_1 < i_2$ or (ii) $i_1=i_2$ and $j_1 < j_2$.

\item Probe sequences are divided into two types: vertex sequences
  and dummy sequences.

\begin{itemize}
\item Vertex sequences:
  For vertex $i$ in $V$,
  we construct a \emph{vertex sequence} $v_i$ such that
  $v_i$ is a string of the edge characters
  corresponding to all edges incident with it.
  The order of the edge characters in the string follows the same order defined in (1).
\item Dummy sequences:
  Furthermore, we add two \emph{dummy sequences} $w_{\$}$ and $w_{\#}$
  of length $n+1$ each, such that
  $w_{\$} = \$^{n+1}$, and $w_{\#} = \#^{n+1}$.
\end{itemize}
\end{enumerate}

Notice that the length of each vertex sequence $v_i$ is at most $n-1$.
Furthermore, for each vertex sequence, the order of the edge characters
follows the order defined in (1).
This implies that 
there exists a permutation of the edge characters such that
it is a common supersequence of all vertex sequences and
this forms a valid deposition sequence for all the vertex sequences.

The two dummy sequences $w_{\$}$ and $w_{\#}$
are to ensure that in an optimal placement
they will be placed in the leftmost and the rightmost cells
in the 1D array,
otherwise, the border length would be too large.
In other words, all the other vertex sequences $v_i$
will be placed in cells such that both left and right boundaries count.

We now claim that the graph $G$ has a Hamiltonian Path if and only if
the optimal border length of the $\OBMP$ instance $I$ is
$2(n+1) + (4m-2(n-1))$.
The first term $2(n+1)$ is for the two dummy sequences $w_{\$}$ and $w_{\#}$
and $4m-2(n-1)$ is for the vertex sequences $v_i$.

\runtitle{Suppose that there is an Hamiltonian Path in $G$.}
Without loss of generality, we assume that the Hamiltonian path is $v_1, v_2, \cdots, v_n$.
We place the vertex sequences in this order in the cells of the array,
with the leftmost and rightmost cells containing $w_{\$}$ and $w_{\#}$, respectively.

Consider any permutation of the edge characters such that it is a common supersequence
of all vertex sequences.
Suppose we use this sequence as the deposition sequence.
Note that each edge character appears in exactly two vertex sequences.
If there is no sharing between neighboring sequences,
each edge character incurs border length of $2$ for each of the two vertex sequences,
and the total border length would be $4m$.
Since we place the vertex sequences according to the order in an Hamiltonian path,
every edge character in this path is shared by two neighboring vertex sequences,
thus, saving a border length of $2$.
In total, we have $n-1$ edges $e_{i,i+1}$ in the Hamiltonian path,
for $1 \leq i \leq n-1$.
Therefore, we can save $2(n-1)$, implying that
the border length of the masks associated with edge characters
is $4m-2(n-1)$.
Together with the masks for the dummy characters,
the total border length is
$2(n+1) + 4m - 2(n-1)$. \enspace

\runtitle{Suppose that there is an embedding for $I$ with border length
  $2(n+1) + (4m-2(n-1))$.}
For the dummy sequences,
they have to be placed at the leftmost and rightmost cells,
otherwise, the border length incurred will be greater than $2(n+1)$.
Each edge character only appears in two vertex sequences.
So to have sharing for this character,
we can only have these two sequences being neighbors in the graph $G$.
So in order to save $2(n-1)$, we need each vertex sequence
to share one character with its neighbor in the array,
so the way the sequence
in the array should lead to an Hamiltonian path.

With the above discussion, we conclude the following theorem.

\begin{theorem}
\label{thm:1DBMP}
The $\OBMP$ is NP-Hard.
\end{theorem}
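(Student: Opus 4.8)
The plan is to formalize the reduction from Hamiltonian Path to $\OBMP$ using exactly the instance $I$ constructed before the theorem. Given an undirected graph $G=(V,E)$ with $|V|=n$ and $|E|=m$, the instance $I$ (the $1\times(n+2)$ array, the vertex probes $v_i$ listing their incident edge characters in the order of item~(1), and the dummy probes $w_{\$}=\$^{n+1}$, $w_{\#}=\#^{n+1}$) is computable in polynomial time, with polynomial alphabet and probe lengths. I would prove the decision statement: $G$ has a Hamiltonian path if and only if the minimum border length of $I$ equals $K:=2(n+1)+\bigl(4m-2(n-1)\bigr)=4m+4$.

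\emph{Forward direction.} Given a Hamiltonian path, relabel so that it is $v_1,\dots,v_n$, and place $w_{\$},v_1,\dots,v_n,w_{\#}$ in the $n+2$ cells left to right. Take as deposition sequence the $m$ edge characters listed in the order of item~(1), together with $n+1$ copies of $\$$ and of $\#$; since every $v_i$ is sorted in that order this is a valid deposition sequence, and I would embed each $v_i$ by aligning every one of its edge characters with the unique deposition position carrying it. The $2(n+1)$ masks for $\$$ and $\#$ each mark a single end cell, contributing border length~$1$ apiece; for each edge $e_{i,j}$, its mask marks exactly $\{\placement(v_i),\placement(v_j)\}$ and has border length~$2$ if those cells are adjacent (the $n-1$ path edges) and $4$ otherwise. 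Summing gives $2(n+1)+2(n-1)+4\bigl(m-(n-1)\bigr)=K$, so the optimum is at most $K$.

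\emph{Converse and matching lower bound.} Here I would lower-bound the border length of an \emph{arbitrary} solution. Since $\$$ (resp.\ $\#$) occurs only in $w_{\$}$ (resp.\ $w_{\#}$), every mask depositing $\$$ marks only the cell $\placement(w_{\$})$, so the $\$$-masks contribute at least $(n+1)\deg(\placement(w_{\$}))$ and likewise for $\#$, where $\deg(\cdot)\in\{1,2\}$ counts array-neighbours. Since each edge character $e_{i,j}$ occurs once in $v_i$, once in $v_j$, and in no other probe, among the masks depositing $e_{i,j}$ only two are non-empty and they mark $\placement(v_i)$ and $\placement(v_j)$; hence $e_{i,j}$ contributes at least $\deg(\placement(v_i))+\deg(\placement(v_j))$, minus~$2$ only when those two cells are adjacent (which further requires the character aligned in both embeddings). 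Let $a\in\{0,1,2\}$ be the number of vertex probes placed in the two extreme cells, so $\deg(\placement(w_{\$}))+\deg(\placement(w_{\#}))=2+a$, and let $N\le n-1$ be the number of edges of $G$ whose endpoints land in adjacent cells (a row of $n$ vertex cells has at most $n-1$ internal adjacencies). Using $\sum_w\deg_G(w)\deg(\placement(w))=4m-\sum_{\placement(w)\text{ extreme}}\deg_G(w)\ge 4m-a(n-1)$, the total border length is at least $(n+1)(2+a)+\bigl(4m-a(n-1)\bigr)-2(n-1)=4m+4+2a\ge K$. Equality forces $a=0$ and $N=n-1$: the two dummies occupy the end cells, the vertex probes occupy the contiguous block between them, and all $n-1$ of its consecutive pairs are edges of $G$, so reading the vertices left to right yields a Hamiltonian path of $G$.

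\emph{Main obstacle.} The delicate step is this lower bound: one must rule out every way of reaching $K$ other than via a Hamiltonian path --- notably that sliding a dummy off an end cell (a loss of at least $n+1$ over its $n+1$ masks) can never be recouped by the few extra graph edges one may then realise as array adjacencies, and that repeated deposition characters produce only empty, cost-free masks and hence cannot help. The bounds $N\le n-1$ and $\sum_{\placement(w)\text{ extreme}}\deg_G(w)\le a(n-1)$ are exactly what produces the $+2a$ slack that closes this gap; carrying out this bookkeeping carefully, including the $1$D boundary effects hidden in the $\deg$ values, is the bulk of the work, whereas the forward construction and the polynomial-size claim are routine.
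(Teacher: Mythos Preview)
Your proposal is correct and follows exactly the paper's reduction and target value $K=2(n+1)+4m-2(n-1)$; in fact your converse direction is more carefully quantified than the paper's, which simply asserts that the dummies must sit at the ends ``otherwise the border length incurred will be greater than $2(n+1)$'' and that saving $2(n-1)$ forces a Hamiltonian path, whereas you make the slack $+2a$ explicit and close off the possibility of recouping a misplaced dummy via extra adjacencies.
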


\subsection{$\BMP$ on 2D array}
\label{sec:2dbmp}
In this section, we reduce the $\BMP$ on an $1\times n$ array
to $\BMP$ on an $n \times n$ array.
This implies that $\BMP$ is NP-hard.
Consider an instance $I_1$ for $\OBMP$ where
there are $n$ sequences $s_1, s_2, \cdots, s_n$
over an alphabet $\Sigma$,
and the length of $s_i$ is $\ell_i$.
Let $\ell = \max_{1 \leq i \leq n} \ell_i$.
We construct an instance $I_2$ for $\BMP$
which contains two types of sequences,
namely, the given sequence and the dummy sequence.
The alphabet used is a superset of $\Sigma$,
precisely,
$\Sigma' = \Sigma \cup \{x_1, x_2, \cdots, x_n\} \cup \{\$\}$.
The instance $I_2$ is constructed as follows.
Let $k > \ell$ be a large integer to be determined later.

\begin{itemize}
\item Dummy sequences: we create $n^2-n$ dummy sequences each containing
  one character $\$$.
\item Given sequences: for each $s_i$,
  we create a length $k$ sequence $x_i^{k-\ell_i} \cdot s_i$.
\end{itemize}

We claim that the best way to place these $n^2$ sequences 
is to put the given sequences on the top row. 
In that case, the optimal solution for $I_1$ would give
an optimal solution for $I_2$ and vice versa.

We now prove the claim.
For each cell in the array, there are four boundaries, top, bottom, left, and right.
A sequence placed in a certain cell contributes to the overall border length
an amount of four times its length minus the sharing of characters with
its four neighbors.
Consider a sequence $s$, let us denote by $\share(s,s')$ the number of characters
that can be shared between two sequences $s$ and $s'$.
Let us also denote by $gs$, $ds$, and $b$ be a given instance, a dummy sequence,
and the outmost boundary of the array.
Then we have the following relationships.
\begin{eqnarray*}
\share(gs,gs) \leq \ell, & \quad \share(gs,ds) = 0, & \quad \share(gs,b) = k, \\
 \share(ds,ds) = 1,  & \share(ds,b) = 1
\end{eqnarray*}

If we arrange all the sequences such that the given sequences are
placed on the top row, 
we would have a sharing of $(n+2) \times \share(gs,b)
=(n+2)k$.
If any of these given sequences are not placed on the top row,
we lose a sharing of at least $k$.
No matter how the sequences are placed,
the maximum sharing apart from those with the outmost boundaries
of the array is at most $4n^2\ell$.
If we set $k$ to be large enough,
e.g., $k=4n^2\ell+1$,
then any possible internal sharing (not with outmost boundaries)
is not sufficient to compensate the loss of $k$.\footnote{%
It is possible to set a smaller value of $k$ by more careful analysis.
Yet the ultimate conclusion is still the same that we have an instance
for $\BMP$ that it is the best to have all the given sequences
placed on the top row of the array.}
We have proved the claim that
all the given sequences should be placed on the top row of the array
and the following theorem follows from Theorem~\ref{thm:1DBMP}.

\begin{theorem}
\label{thm:BMP}
The (two-dimensional) $\BMP$ is NP-hard.
\end{theorem}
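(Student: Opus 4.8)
The plan is to prove Theorem~\ref{thm:BMP} by a reduction from $\OBMP$ to $\BMP$, using the construction already laid out in Section~\ref{sec:2dbmp}. Given an instance $I_1$ of $\OBMP$ on $n$ sequences $s_1,\dots,s_n$ over $\Sigma$ with maximum length $\ell$, I would build the instance $I_2$ described above: $n^2-n$ dummy singletons $\$$, and for each $s_i$ a padded sequence $x_i^{k-\ell_i}\cdot s_i$ of common length $k$ over the enlarged alphabet $\Sigma'=\Sigma\cup\{x_1,\dots,x_n\}\cup\{\$\}$, where $k$ is chosen large, say $k=4n^2\ell+1$. The whole argument reduces to establishing the single structural claim that in \emph{every} optimal placement-and-embedding of $I_2$, the $n$ given sequences occupy the top row; once that is known, the restriction of an optimal $I_2$ solution to the top row is an $\OBMP$ solution for $I_1$ (up to an additive constant accounting for the dummy rows and the padding characters), and conversely any $\OBMP$ solution lifts to an $I_2$ solution of the matching cost, so deciding $\OBMP$ reduces to computing $\mathrm{BL}$ on $I_2$. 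NP-hardness then follows from Theorem~\ref{thm:1DBMP}.

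The key steps, in order. First, record the per-cell accounting: via equation~\eqref{eq:maskcost}, or equivalently \eqref{eq:bmpcost}, the total border length equals $4$ times the sum of sequence lengths (counting the outermost boundary as a virtual neighbour of fixed length, so every cell has four neighbours) minus twice the total \emph{sharing} over all adjacent pairs, where $\share(s,s')$ is the largest number of positions that two embeddings of $s,s'$ can agree on. Second, bound the sharings as tabulated: two given sequences share at most $\ell$ (the $x_i$-prefixes are over disjoint singleton sub-alphabets, so only the $s_i$-suffixes can contribute); a given sequence and a dummy share $0$ (the padded sequence contains no $\$$); two dummies share $1$; a dummy and the boundary share $1$; and a given sequence placed against the boundary can share all $k$ of its characters with a suitably long outer strip. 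Third, the exchange argument: if some given sequence is \emph{not} on the top row, it loses at least the boundary sharing $k$ relative to the top-row arrangement, whereas the total of all possible internal sharings not involving the outermost boundary is at most $4n^2\ell<k$; hence any placement with a given sequence off the top row is strictly worse than the canonical one, proving the claim. Fourth, verify the cost correspondence: with the given sequences on the top row, the contribution of their $s_i$-suffixes and mutual adjacencies is exactly the $\OBMP$ cost of the induced ordering plus fixed terms (the $x_i$-prefixes never share with anything and contribute a constant; the dummy block and the top/bottom/side boundaries contribute constants), so $\mathrm{OPT}(I_2)=\mathrm{OPT}(I_1)+C$ for an explicitly computable constant $C$; this makes the reduction valid in the decision sense.

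The main obstacle is making the exchange argument fully rigorous: I must show not merely that the canonical top-row arrangement beats any single "bad" arrangement, but that every optimal solution has all given sequences on the top row, which requires a clean global inequality rather than a local swap. Concretely, I would argue that in any arrangement the total sharing splits as (sharing with the outermost boundary) plus (internal sharing); the former is maximized, at $nk$ plus lower-order dummy terms, exactly when all $n$ given sequences sit against the boundary on the top row, and is smaller by at least $k$ otherwise, while the latter is at most $4n^2\ell$ regardless; since $k>4n^2\ell$, no gain in internal sharing can recover the loss, so optimality forces the top-row configuration. A secondary subtlety is checking that placing all $n$ given sequences on the top row is actually feasible and boundary-optimal simultaneously — the top row has $n$ cells (in the $n\times n$ array, or $n+2$ if one pads, matching the $1\times(n+2)$ shape used in Section~\ref{sec:1dbmp}) and each top-row cell has the outer boundary as its top neighbour, so all $k$ characters of each given sequence can indeed be shared upward with a common deposition strip; one should also confirm that the dummy sequences can be laid out so their contribution is the claimed constant independent of the given-sequence ordering. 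Once these points are nailed down, Theorem~\ref{thm:BMP} follows immediately from Theorem~\ref{thm:1DBMP}.
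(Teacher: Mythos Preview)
Your proposal follows exactly the paper's reduction from $\OBMP$ to $\BMP$: the same construction of $I_2$ (padded sequences $x_i^{k-\ell_i}\cdot s_i$ and $\$$-dummies over the enlarged alphabet), the same table of sharing bounds, and the same exchange argument with the same choice $k=4n^2\ell+1$ to force the given sequences onto the top row. The additional points you flag --- the need for a global rather than local inequality and the explicit additive-constant cost correspondence $\mathrm{OPT}(I_2)=\mathrm{OPT}(I_1)+C$ --- are precisely the steps the paper sketches, so your plan is essentially identical to the paper's own proof.
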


\section{An $O(n^{\frac{1}{4}}\log^2{n})$ approximation algorithm for the $\BMP$}
\label{sec:approx}


In this section we present an $O(n^{\frac{1}{4}}\log^2{n})$ approximation algorithm for the $\BMP$ problem. This improves on the previous $O(\sqrt{n}\log^2{n})$ approximation.
We use the approximation algorithm for the $\PBMP$ (Corollary~\ref{thm:pbmp} in Section~\ref{sec:prelim}) as a blackbox.


First, we discuss the connection between the border length and the longest common subsequence. 
We denote the longest common subsequence between two probes $s_i$ and $s_j$, of lengths $\ell_i$ and $\ell_j$, by $LCS(s_i,s_j)$. The corresponding length is denoted by $|LCS (s_i, s_j)|$.
For any probe embedding $\epsilon$, the maximum number of common deposition nucleotides between $s_i$ and
$s_j$ is $|LCS (s_i, s_j)|$, in other words, $border_{\epsilon} (s_i, s_j) \ge \ell_i + \ell_j - 2|LCS (s_i, s_j)|$. We define $d(s_i,s_j) = \ell_i + \ell_j - 2|LCS (s_i, s_j)|$.  We also observe that this distance measure is a metric on the set of input strings.

Therefore, if we can place the probes into the array such that the sum of the distances between any adjacent cells is within a factor $c$ of the optimum (we refer to this problem as the \emph{placement problem}), then we can apply the $O(\log{n})$ approximation algorithm for the $\PBMP$ and obtain an $O(c\log{n})$ approximation for the $\BMP$. Formally, we want to find a permutation $\pi : \{1,\dots,n\} \to \{1,\dots,n\} $ such that the following quantity is minimized:

\begin{align*}
S(\pi) = & \sum_{i=1}^{n-1} d(\pi(i),\pi(i+1)) - \sum_{i=1}^{\sqrt{n-1}} d(\pi(i\sqrt{n}),\pi(i\sqrt{n}+1))\\
  & + \sum_{i=1}^{\sqrt{n}} \sum_{j=1}^{\sqrt{n-1}} d(\pi(i+(j-1)\sqrt{n}),\pi(i+j\sqrt{n}))
\end{align*}

To see why the sum is defined in this way, imagine that the probes $\pi(1), \dots, \pi(\sqrt{n})$ are placed on the first row of the array in this order, $\pi(\sqrt{n} + 1), \dots, \pi(2\sqrt{n})$ on the second row and so on.

The next proposition follows, given the previous observations.
\begin{proposition}
\label{prop:clogn-approx}
A $c$-approximation algorithm for the \emph{placement problem} implies an $O(c\log{n})$ approximation algorithm for the $\BMP$.
\end{proposition}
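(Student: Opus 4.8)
The plan is to chain together two approximation steps: first a $c$-approximate solution to the \emph{placement problem}, then the $O(\log n)$-approximation for $\PBMP$ from Corollary~\ref{thm:pbmp}. The key observation to make precise is that the quantity $S(\pi)$ defined above is exactly the sum, over all pairs of cells that are neighbors in the $\sqrt n \times \sqrt n$ grid, of $d(s_{\pi(u)}, s_{\pi(v)})$, where the placement lays out $\pi(1),\dots,\pi(\sqrt n)$ along the first row, $\pi(\sqrt n + 1),\dots,\pi(2\sqrt n)$ along the second row, and so on. Concretely, the first sum $\sum_{i=1}^{n-1} d(\pi(i),\pi(i+1))$ counts all horizontally-consecutive pairs \emph{plus} the spurious pairs that wrap from the end of one row to the start of the next; the second sum subtracts exactly those $\sqrt n - 1$ spurious wrap-around pairs; and the double sum adds the vertical neighbors. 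So $S(\pi) = \sum_{\phi(s_j)\in\neighbor(\phi(s_i))} d(s_i,s_j)$ where $\phi$ is the row-major placement induced by $\pi$.

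Next I would relate $S(\pi)$ to the true border-length objective. For the lower bound: by the observation preceding the proposition, $\border_\embed(s_i,s_j) \ge d(s_i,s_j)$ for \emph{every} embedding $\embed$, so for any placement $\phi$ and any embedding $\embed$,
\[
\borderlen(\phi,\embed) \;=\; \sum_{\phi(s_j)\in\neighbor(\phi(s_i))} \border_\embed(s_i,s_j) \;\ge\; \sum_{\phi(s_j)\in\neighbor(\phi(s_i))} d(s_i,s_j).
\]
In particular, if $\phi^*,\embed^*$ is an optimal $\BMP$ solution and $\pi^*$ is the corresponding permutation, then $S(\pi^*) \le \borderlen(\phi^*,\embed^*) = \mathrm{OPT}_{\BMP}$. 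Hence the optimal value of the placement problem is at most $\mathrm{OPT}_{\BMP}$. For the upper bound, once a placement $\phi$ is fixed, $\PBMP$ asks for an embedding minimizing $\borderlen(\phi,\cdot)$; I would invoke the fact (which is implicit in the $\WMSA/\MRCT$ machinery of Lemma~\ref{thm:mrct}, and is the reason $d$ is the right lower bound) that the $\PBMP$-optimum for placement $\phi$ is at most $O(\log n)$ times $\sum_{\phi(s_j)\in\neighbor(\phi(s_i))} d(s_i,s_j)$ — i.e. the sum-of-$d$ quantity is a valid lower bound on which the $O(\log n)$-approximation is measured. Actually a cleaner route: the $O(\log n)$ guarantee of Corollary~\ref{thm:pbmp} is \emph{relative to $\mathrm{OPT}_{\PBMP}(\phi)$}, and separately $\mathrm{OPT}_{\PBMP}(\phi)$ for the ``consecutive-pairs-on-a-line'' structure is bounded by a constant times the sum of pairwise $d$'s along the layout; I would state exactly which of these two facts the blackbox provides and use that one.

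Putting it together: run the $c$-approximation for the placement problem to get $\pi$ with $S(\pi) \le c\cdot S(\pi^{\mathrm{opt}}) \le c\cdot\mathrm{OPT}_{\BMP}$; fix the row-major placement $\phi$ induced by $\pi$; run the $O(\log n)$-approximation for $\PBMP$ on $\phi$ to get an embedding $\embed$ with $\borderlen(\phi,\embed) \le O(\log n)\cdot\left(\text{lower bound associated with }\phi\right) \le O(\log n)\cdot S(\pi) \le O(c\log n)\cdot\mathrm{OPT}_{\BMP}$, which gives the claimed $O(c\log n)$-approximation for $\BMP$. The step I expect to be the main obstacle — or at least the one needing the most care — is the middle inequality $\borderlen(\phi,\embed) \le O(\log n)\cdot S(\pi)$: one must verify that the $\MRCT$-based $\PBMP$ approximation is measured against a lower bound that is itself dominated by (a constant times) the neighbor-sum of $d(\cdot,\cdot)$, rather than against $\mathrm{OPT}_{\PBMP}(\phi)$ in a way that could be much larger than $S(\pi)$. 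Everything else is bookkeeping about how the three sums in $S(\pi)$ enumerate grid adjacencies.
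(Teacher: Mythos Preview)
Your proposal is correct and follows the same two-step chaining the paper has in mind; the paper itself does not spell out a proof beyond ``follows, given the previous observations,'' so your write-up is in fact more detailed than the original. In particular you are right to flag the middle inequality $\borderlen(\phi,\embed)\le O(\log n)\cdot S(\pi)$ as the only substantive step: the resolution is exactly the one you conjecture, namely that in the $\WMSA$/$\MRCT$ reduction underlying Corollary~\ref{thm:pbmp} the $O(\log n)$ guarantee is measured against the weighted sum of pairwise edit distances $\sum_{i,j} w(i,j)\,d(s_i,s_j)$ (with $w(i,j)=1$ for grid neighbors and $0$ otherwise), which is precisely $S(\pi)$; the alignment/embedding obtained from the routing tree has cost at most the routing cost, which in turn is at most $O(\log n)$ times this lower bound. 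So the ``cleaner route'' you describe is the correct one, and once that is stated the rest is indeed bookkeeping.
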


Since it is difficult to find in polynomial time a permutation which optimizes the value $S$ on this general metric, we first embed the metric into a tree (in fact, into a distribution of trees) with $O(\log{n})$ distortion using the algorithm of Fakcharoenphol, Rao and Talwar~\cite{FRT03} (the same algorithm used in the $\MRCT$, and implicitly $\PBMP$, approximation). This idea, together with Proposition~\ref{prop:clogn-approx}, gives us the following statement.

\begin{proposition}
\label{prop:clog2n-approx}
If we can approximate the placement problem on a tree (i.e., probes are associated to vertices in a tree and the distance between two probes is the length of the unique path between them) within a factor of $c$, then we have an $O(c\log^2{n})$ approximation to the $\BMP$.
\end{proposition}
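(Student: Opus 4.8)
The plan is to compose the probabilistic tree embedding of Fakcharoenphol, Rao and Talwar~\cite{FRT03} with the assumed $c$-approximation for the placement problem on trees, and then invoke Proposition~\ref{prop:clogn-approx}. The first thing I would record is that, despite its three-part form, $S(\pi)$ is exactly $\sum_{(u,v)} d(\pi(u),\pi(v))$ where the sum ranges over the set $\textup{Grid}$ of pairs of cells that are neighbors in the $\sqrt n \times \sqrt n$ array: the first sum walks the cells $1,\dots,n$ in row-major order (so consecutive cells are horizontally adjacent except across a row break), the subtracted sum deletes exactly the $\sqrt n - 1$ spurious pairs joining the last cell of a row to the first cell of the next, and the double sum adds all vertical adjacencies. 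Consequently $S(\pi)$ is \emph{monotone in the metric}: if $d'(x,y)\ge d(x,y)$ for all $x,y$, then $S_{d'}(\pi)\ge S_{d}(\pi)$ for every permutation $\pi$.

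Next I would apply~\cite{FRT03} to the finite metric $(\probeset, d)$ to obtain a distribution over tree metrics $d_T$ on a vertex set containing $\probeset$ (we only ever use the restriction of $d_T$ to $\probeset$, i.e.\ the probes sit at some of the tree's vertices) enjoying the two standard properties: non-contraction $d_T(x,y)\ge d(x,y)$ for all $x,y$, holding pointwise and with certainty; and bounded expected expansion $\mathbb{E}_T[d_T(x,y)]\le O(\log n)\,d(x,y)$. Sample such a $T$, run the assumed $c$-approximation for the placement problem on $T$ to get a permutation $\pi$, and let $\pi^\star$ be an optimal permutation for the original metric $d$. Then chain the inequalities: $S_d(\pi)\le S_{d_T}(\pi)$ by monotonicity and non-contraction; $S_{d_T}(\pi)\le c\cdot S_{d_T}(\pi^\star)$ since $\pi$ is $c$-approximate for placement on $T$ and $\pi^\star$ is feasible there; and, taking expectations, $\mathbb{E}_T[S_{d_T}(\pi^\star)] = \sum_{(u,v)\in\textup{Grid}} \mathbb{E}_T[d_T(\pi^\star(u),\pi^\star(v))] \le O(\log n)\sum_{(u,v)\in\textup{Grid}} d(\pi^\star(u),\pi^\star(v)) = O(\log n)\cdot S_d(\pi^\star)$. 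Combining, $\mathbb{E}_T[S_d(\pi)]\le O(c\log n)\cdot S_d(\pi^\star)$, so $\pi$ is an $O(c\log n)$-approximate solution to the placement problem on $(\probeset, d)$ in expectation; a deterministic guarantee then follows either from the derandomized variant of~\cite{FRT03} or by taking the best of $O(\log n)$ independent samples. Everything here runs in polynomial time, since the embedding, the assumed tree routine, and the reductions are all polynomial.

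Finally, feeding this $O(c\log n)$-approximation for the placement problem into Proposition~\ref{prop:clogn-approx} yields an $O(c\log n\cdot \log n)=O(c\log^2 n)$ approximation for the $\BMP$, as claimed. The one point that needs care — and what I expect to be the main obstacle — is the asymmetric use of the two FRT properties: the tree objective $S_{d_T}(\pi)$ must be bounded \emph{above} by the original objective (which uses non-contraction, pointwise and with certainty), while the tree optimum $S_{d_T}(\pi^\star)$ must be bounded \emph{above} back in terms of $\mathrm{OPT}$ (which uses the expected-expansion bound, and hence only delivers an in-expectation guarantee). Establishing the monotonicity of $S(\pi)$ in the first step is precisely what makes the non-contraction step legitimate despite the subtracted term in the definition of $S$.
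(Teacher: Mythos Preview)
Your proposal is correct and follows essentially the same route as the paper: embed the metric $(\probeset,d)$ into a distribution of trees via Fakcharoenphol--Rao--Talwar with $O(\log n)$ expected distortion, solve the placement problem on the tree, and then invoke Proposition~\ref{prop:clogn-approx}. The paper in fact offers no proof beyond the single sentence ``This idea, together with Proposition~\ref{prop:clogn-approx}, gives us the following statement,'' so your write-up is considerably more careful---in particular, your explicit observation that $S(\pi)$ is a nonnegative sum over grid adjacencies (and hence monotone in the underlying metric despite the subtracted term in its definition) and your separation of the pointwise non-contraction step from the in-expectation expansion step fill in details the paper leaves implicit.
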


Our approximation algorithm for the \emph{placement problem on trees} is very simple: we consider the ordering of the vertices given by an Euler tour of the tree. We then prove that this is an $O(n^\frac{1}{4})$ approximation algorithm for \emph{the placement problem on trees}. Then, by Proposition~\ref{prop:clog2n-approx} we are guaranteed to have an $O(n^\frac{1}{4} \log^2{n})$ approximation algorithm for the $\BMP$ problem.

The algorithm for the $\BMP$ problem is described formally in Algorithm~\ref{alg1}.

\begin{algorithm}[H]
\caption{\label{alg1} The $O(n^\frac{1}{4} \log^2{n})$ approximation algorithm for the $\BMP$}
\begin{algorithmic}[1]
\STATE \textbf{Input:} The strings $s_1, s_2, \dots, s_n$.
\STATE Define $d (s_i, s_j) = \ell_i + \ell_j - 2|LCS (s_i, s_j)|$
\STATE Embed the metric given by this distance and the set of input points into a tree $T$ using the algorithm from~\cite{FRT03}.
\STATE Let $\pi : \{1,2,\dots,n\} \to \{1,2,\dots,n\}$ be an Euler tour of the tree $T$.
\STATE Place the probes in the array according to $\pi$: the probes $\pi(1), \dots, \pi(\sqrt{n})$ are placed on the first row of the array in this order, $\pi(\sqrt{n} + 1), \dots, \pi(2\sqrt{n})$ on the second row and so on.
  (See Figure~\ref{fig:Euler}).
\STATE Apply the $\PBMP$ approximation algorithm.
\STATE \textbf{Output:} The placement of the probes on the array based on the Euler tour and the embedding of the probes given by the $\PBMP$ approximation algorithm.

\end{algorithmic}
\end{algorithm}
\begin{figure}[t]
  \begin{center}
  \subfigure[]{\includegraphics[height=4cm]{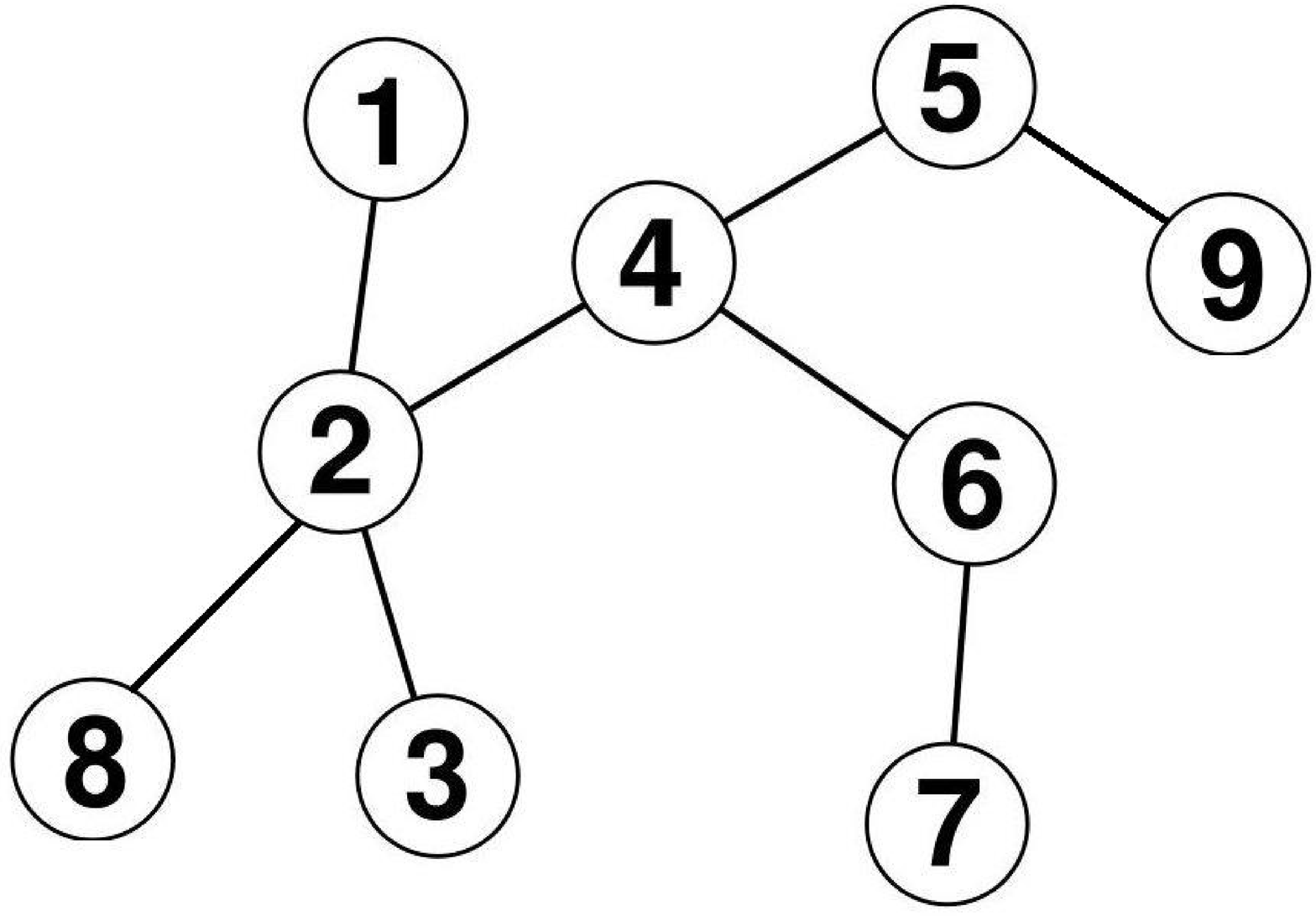}\label{fig:Euler1}}
  \hspace{1ex}
  \subfigure[]{\includegraphics[height=4cm]{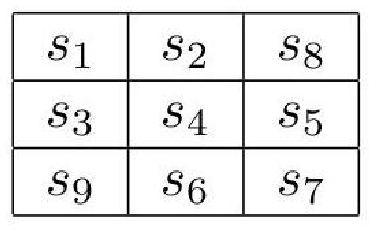}\label{fig:Euler2}}
   \caption{(a) Suppose the embedding in~\cite{FRT03} returns such a tree for $9$ probes.
   (b)~The placement of these probes on the array according to an Euler tour of the tree. }
  \label{fig:Euler}
  \end{center}
\end{figure}

Before analyzing the algorithm, let us introduce a few notations. We denote by $T$ the tree obtained after the tree embedding. Notice that the cost of a solution is given by summing each edge of $T$ several times.  We say that an edge $(x,y) \in T$ is \emph{crossed} $r$ times in a solution $\pi$ if it is used $r$ times in the solution, i.e., it belongs to exactly $r$ of the $2\sqrt{n}(\sqrt{n} - 1)$ paths of the solution.

Now, we want to find a lower bound for the optimal solution. We do so, by showing that in any solution, each edge of the tree has to be crossed at least a certain number of times. This is stated formally in the following lemma.
Let $(x,y) \in T$ and denote by $A$ and $B$ the two connected components resulted from removing $(x,y)$.

\begin{lemma}
\label{thm:cross_lower}
In any permutation $\pi$ the edge $(x,y)$ is crossed at least $\sqrt{\min(|A|,|B|)}$ times.
\end{lemma}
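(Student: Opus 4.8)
I would prove the lemma by a counting argument on how the $2\sqrt{n}(\sqrt n-1)$ adjacency-paths of the grid placement interact with the cut induced by removing $(x,y)$. Removing $(x,y)$ splits $T$ into components $A$ and $B$; write $a=|A|$, $b=|B|$, and assume without loss of generality $a=\min(a,b)$. Every path in the solution connecting two cells whose probes lie on opposite sides of the cut must use the edge $(x,y)$ (it is the unique path in $T$ between any vertex of $A$ and any vertex of $B$). So the number of times $(x,y)$ is crossed equals the number of grid-adjacent cell pairs $(u,w)$ such that $\pi$ places an $A$-probe in one and a $B$-probe in the other. This reduces the lemma to a purely combinatorial statement about the $\sqrt n\times\sqrt n$ grid: if $a$ of its cells are coloured red and the rest blue, the number of bichromatic grid edges is at least $\sqrt a$ (when $a\le n/2$).

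The core step is therefore an isoperimetric bound on the grid. I would argue as follows. Let $R$ be the set of red cells, $|R|=a$. Project $R$ onto the two axes: let $R$ occupy $p$ distinct rows and $q$ distinct columns. On the one hand $a\le pq$, so $\max(p,q)\ge\sqrt a$. On the other hand, any row that contains at least one red cell but is not entirely red contributes at least one horizontal bichromatic edge, and similarly for columns with vertical bichromatic edges; and a row (or column) that is entirely red — which can only happen if $\sqrt n\le a$, hence $\sqrt a\le a^{1/2}\le$ that full row's length is fine — still forces bichromatic edges on its top/bottom (left/right) boundary with the neighbouring row unless that is red too, etc. The clean way: in each of the $p$ occupied rows pick a red cell; moving horizontally out of the red region in that row crosses a bichromatic edge unless the whole row is red, and moving vertically crosses one unless the whole column through that cell is red. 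A short case analysis shows one cannot simultaneously have all $p$ rows full and all relevant columns full without $R$ being a full sub-band of size $\ge\sqrt n\cdot$(something), and in every case the count of bichromatic edges is at least $\max(p,q)\ge\sqrt a\ge\sqrt{\min(a,b)}$.

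A cleaner alternative I would actually write down: think of the $\sqrt n$ rows. Let $r$ be the number of rows that contain at least one red cell and at least one blue cell, and let $f$ be the number of fully-red rows. Each of the $r$ mixed rows gives $\ge 1$ horizontal bichromatic edge. If $f=0$ then all red cells lie in the $r$ mixed rows; doing the same with columns, all red cells lie in the (say) $c$ mixed columns, so $a\le rc$ and the number of bichromatic edges is $\ge r+c\ge 2\sqrt{rc}\ge 2\sqrt a>\sqrt a$. If $f\ge1$, then there is a fully-red row of length $\sqrt n$, so $a\ge\sqrt n$, i.e. $\sqrt a\ge n^{1/4}$; and a fully red row adjacent to a non-fully-red row (which exists since $a\le n/2<n$) contributes $\ge1$ vertical bichromatic edge per occupied column of that neighbouring structure — more simply, the boundary of the union of fully-red rows alone already has $\ge\sqrt n\ge\sqrt a$ horizontal-edge length on its top or bottom side. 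Either branch yields $\ge\sqrt a=\sqrt{\min(|A|,|B|)}$ crossings, completing the proof.

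**Main obstacle.** The graph-theoretic reduction ("cross count $=$ bichromatic grid edges") is immediate from the tree structure and costs nothing. The real work — and the only place an off-by-constant or a missed corner case could bite — is the grid isoperimetric inequality, in particular handling the degenerate configurations where $R$ is a union of full rows or full columns (so one axis-projection is maximal and gives no "mixed" edges along that direction). Getting the bound to come out as exactly $\sqrt{\min(|A|,|B|)}$ rather than some smaller constant multiple requires the $a\le pq\Rightarrow\max(p,q)\ge\sqrt a$ observation to be combined correctly with the edge count in the full-row/full-column cases; that case split is where I would be most careful.
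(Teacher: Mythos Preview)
Your approach is exactly the paper's: reduce the crossing count for $(x,y)$ to the size of the $(A,B)$-cut in the $\sqrt n\times\sqrt n$ grid, then invoke a grid edge-isoperimetric bound. The paper's own proof is in fact much terser than yours---it simply asserts that ``the minimum cut determined by two sets of size $|A|$ and $|B|$ has size $\sqrt{\min(|A|,|B|)}$'' and stops, treating the isoperimetric inequality as a known fact. So on the reduction step you match the paper precisely; the extra content in your proposal is an attempt to supply what the paper omits.

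That attempt is worthwhile, but the case analysis needs tightening, as you yourself anticipated. In your Case~1 you tacitly assume there are no fully-red \emph{columns} either when you write ``doing the same with columns,'' and in Case~2 the claim that ``the boundary of the union of fully-red rows alone already has $\ge\sqrt n$'' bichromatic edges is not correct: a fully-red row may be adjacent to a row that is almost entirely red, contributing only one bichromatic vertical edge, not $\sqrt n$. A clean four-way split fixes both issues. Let $f$ and $g$ be the number of fully-red rows and columns, respectively. If $f=g=0$, your $a\le rc$ argument gives at least $r+c\ge 2\sqrt a$ bichromatic edges. If $f\ge1$ and $g=0$, the fully-red row forces every column to contain a red cell, so all $\sqrt n$ columns are mixed, yielding $\ge\sqrt n\ge\sqrt a$ vertical bichromatic edges; the case $f=0$, $g\ge1$ is symmetric. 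If $f\ge1$ and $g\ge1$, every row and every column contains red, so the number of mixed rows plus mixed columns is $(\sqrt n-f)+(\sqrt n-g)$; since $f\sqrt n\le a\le n/2$ gives $f\le\sqrt n/2$ (and likewise for $g$), this is at least $\sqrt n\ge\sqrt a$. With this patch your argument goes through and genuinely proves the bound the paper merely quotes.
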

\begin{proof}
If we consider the probes placed on a grid graph (i.e., an $\sqrt{n} \times \sqrt{n}$ array), then the two sets of probes $A$ and $B$ determine a cut in the graph. We argue that the size of the cut is exactly the number of times the edge $(x,y)$ is crossed: for each edge $(\pi(i),\pi(j))$ in the cut, we have to add to the solution the corresponding path $\pi(i) \to \pi(j)$. But the path $\pi(i) \to \pi(j)$ has to cross the edge $(x,y)$, since $\pi(i) \in A$ and $\pi(j) \in B$.  The minimum cut determined by two sets of size $|A|$ and $|B|$ has size $\sqrt{\min(|A|,|B|)}$ and therefore the lemma follows.
\end{proof}

We give an upper bound by considering the ordering of vertices given by an Euler tour of the tree.

\begin{lemma}
\label{thm:cross_upper}
In an Euler tour ordering, $(x,y)$ is crossed at most $O(\min(\sqrt{n},|A|,|B|))$ times.
\end{lemma}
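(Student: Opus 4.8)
The plan is to exploit the defining feature of an Euler‑tour ordering (the order in which vertices are first visited): when we delete $(x,y)$ and root $T$ at $x$, the vertex set of the subtree hanging below $y$ — which is exactly one of the two components $A,B$ — occupies a \emph{contiguous} block of positions in $\pi$. Indeed, the traversal enters the subtree of $y$, records the first occurrence of every vertex of that subtree, and only then leaves, so no vertex outside the subtree is recorded in between. Hence, after possibly swapping the names of $A$ and $B$, we may assume that $\pi^{-1}(A)=\{a,a+1,\ldots,b\}$ is an interval of positions and $B$ is its complement.

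By the argument already used in the proof of Lemma~\ref{thm:cross_lower}, the number of times $(x,y)$ is crossed equals the size of the cut that the partition $(A,B)$ induces in the $\sqrt n\times\sqrt n$ grid graph, where position $p$ lies in row $\lceil p/\sqrt n\rceil$ and column $((p-1)\bmod\sqrt n)+1$. So it suffices to bound, by $O(\min(\sqrt n,|A|,|B|))$, the grid‑cut separating an interval of positions $[a,b]$ from the rest. I would split the crossing edges into "row" edges $\{p,p+1\}$ (valid only when $p\not\equiv 0\pmod{\sqrt n}$) and "column" edges $\{p,p+\sqrt n\}$ (valid only when $p+\sqrt n\le n$). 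Only $p=a-1$ and $p=b$ can give a crossing row edge, so those contribute at most $2$.

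For the column edges I would argue as follows. A crossing edge $\{p,p+\sqrt n\}$ with $p\in[a,b]$ forces $p+\sqrt n>b$, hence $p$ lies in the length-$\sqrt n$ window $(b-\sqrt n,b]$; this gives at most $\sqrt n$ such edges, at most $|B|$ of them (since $p\in[a,b]$), and at most $|A|$ of them (the map $p\mapsto p+\sqrt n$ injects them into $A\cap[b+1,n]$). Symmetrically, a crossing column edge with $p\notin[a,b]$ forces $p<a$, so $p\in[a-\sqrt n,a-1]\subseteq A$, yielding at most $\sqrt n$, at most $|A|$, and at most $|B|$ (via $p\mapsto p+\sqrt n\in B$) such edges. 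Summing the two families of column edges and the $\le 2$ row edges, and using that $\min(\sqrt n,|A|,|B|)\ge 1$ because both components are nonempty, bounds the whole cut by $6\min(\sqrt n,|A|,|B|)$, which is the claim.

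The only delicate point — and hence the main, though modest, obstacle — is making sure one and the same window simultaneously delivers all three bounds $\sqrt n$, $|A|$, and $|B|$: the $|A|$ (respectively $|B|$) bound must come from the injection $p\mapsto p+\sqrt n$ into the relevant part of $A$ (respectively $B$), not from a crude window‑length estimate, and one must keep in mind that a column edge $\{p,p+\sqrt n\}$ does not exist when $p+\sqrt n>n$, so off‑grid indices contribute nothing (otherwise the window near the last row would falsely look like $\Theta(\sqrt n)$ extra crossings).
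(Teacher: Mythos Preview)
Your argument is correct and follows the same line as the paper: one of the two components occupies a contiguous block of positions in the Euler tour order, so at most two row edges cross, and the crossing column edges live in a window of length~$\sqrt n$ at each end of the block. The paper obtains the $\min(|A|,|B|)$ bound more crudely (each vertex of the smaller part has at most four grid neighbours, giving $4\min(|A|,|B|)$), while you derive it from the same window via the injection $p\mapsto p+\sqrt n$; both are fine. One cosmetic slip: you have the labels $|A|$ and $|B|$ interchanged throughout---``$p\in[a,b]$'' yields at most $|A|$ (not $|B|$) positions, and $p\mapsto p+\sqrt n$ lands in $[b+1,n]\subseteq\pi^{-1}(B)$, not in $A$; since you take the minimum of the two anyway, the conclusion is unaffected, but you should correct the text.
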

\begin{proof}
Due to the Euler tour, each edge can be crossed by edges from the paths $\pi(i) \to \pi(i+1)$ at most twice. Then we have to count how many edges from the paths $\pi(i) \to \pi(i+\sqrt{n})$ cross the edge $(x,y)$.
We argue that $(x,y)$ cannot be crossed more than $4\min(|A|,|B|)$ times. Suppose $A$ is the set with the smaller cardinality. In the worst case for each element in $A$ all its four neighbors are in $B$ and, therefore $(x,y)$ is crossed $4 \cdot \min(|A|,|B|)$ (this is actually too pessimistic but this suffices for our analysis since we are not interested in the precise constants).

We also argue that $(x,y)$ cannot be crossed more than $O(\sqrt{n})$ times. Since we follow an Euler tour, for an element $\pi(i)$, we have two cases: either $\pi(i+\sqrt{n}) \in A$, or $\pi(i+\sqrt{n}) \notin A$ and $\pi(j+\sqrt{n}) \notin A, \forall j > i$. Therefore, for only $\sqrt{n}$ elements $\pi(i)$ of $A$, $\pi(i + \sqrt{n})$ is in $B$.
The lemma then follows.
\end{proof}

\begin{theorem}
The placement of the probes in the $\sqrt{n} \times \sqrt{n}$ array in the order given by the Euler tour gives an $O(n^{\frac{1}{4}} \log^2{n})$ approximation to the $\BMP$ problem.
\end{theorem}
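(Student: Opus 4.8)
The plan is to show that the Euler-tour placement is an $O(n^{1/4})$-approximation for the \emph{placement problem on trees}, and then to invoke Proposition~\ref{prop:clog2n-approx} to upgrade this directly to an $O(n^{1/4}\log^2 n)$-approximation for $\BMP$. First I would rewrite the cost $S(\pi)$ of any placement $\pi$ on the tree $T$ as a weighted sum over the edges of $T$: since the distance between two probes in $T$ is the length of the path joining them, $S(\pi)=\sum_{(x,y)\in T} w(x,y)\,c_\pi(x,y)$, where $w(x,y)$ is the weight of edge $(x,y)$ and $c_\pi(x,y)$ is the number of grid-adjacency paths of the solution that cross $(x,y)$ — exactly the "crossing" count appearing in the two lemmas. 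Fixing an edge $(x,y)$, letting $A,B$ be the two components of $T-(x,y)$ and $s=\min(|A|,|B|)$, Lemma~\ref{thm:cross_lower} gives $c_{\pi^*}(x,y)\ge \sqrt{s}$ for the optimal placement $\pi^*$, while Lemma~\ref{thm:cross_upper} gives $c_{\pi_E}(x,y)=O(\min(\sqrt{n},s))$ for the Euler-tour placement $\pi_E$.

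The key step is then the elementary per-edge ratio bound $\min(\sqrt{n},s)/\sqrt{s}=O(n^{1/4})$, by a two-case analysis: if $s\le\sqrt{n}$ the ratio is $\sqrt{s}\le n^{1/4}$, and if $s>\sqrt{n}$ the ratio is $\sqrt{n}/\sqrt{s}<\sqrt{n}/n^{1/4}=n^{1/4}$. Multiplying each inequality by $w(x,y)$ and summing over all edges of $T$ yields $S(\pi_E)\le O(n^{1/4})\sum_{(x,y)\in T} w(x,y)\sqrt{s}\le O(n^{1/4})\,S(\pi^*)$, which is precisely the claimed $O(n^{1/4})$-approximation for the placement problem on trees. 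Feeding $c=O(n^{1/4})$ into Proposition~\ref{prop:clog2n-approx} — which packages the $O(\log n)$-distortion FRT embedding of the $d(\cdot,\cdot)$ metric into a distribution of trees together with the $O(\log n)$-approximation for $\PBMP$ from Corollary~\ref{thm:pbmp} — gives the stated $O(n^{1/4}\log^2 n)$-approximation for $\BMP$.

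The main obstacle is making the edge-by-edge comparison legitimate rather than the arithmetic: one must observe that the lower bound of Lemma~\ref{thm:cross_lower} holds for \emph{every} permutation, hence for $\pi^*$ simultaneously at all edges, so that the per-edge inequalities can be summed to compare $S(\pi_E)$ against $S(\pi^*)$; and one must be sure the expectation over the FRT tree distribution costs only the advertised $O(\log n)$ factor at each of the two places it is used, which is exactly what Proposition~\ref{prop:clog2n-approx} asserts. Everything else reduces to the routine case analysis above.
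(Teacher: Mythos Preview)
Your proposal is correct and follows essentially the same approach as the paper: the identical two-case ratio bound $\min(\sqrt{n},s)/\sqrt{s}\le n^{1/4}$ on each edge, combined with Lemmas~\ref{thm:cross_lower} and~\ref{thm:cross_upper} and Proposition~\ref{prop:clog2n-approx}. If anything, your write-up is more explicit than the paper's about the edge decomposition of $S(\pi)$ and about why the per-edge inequalities may be summed against the optimum.
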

\begin{proof}
Consider an edge $(x,y) \in T$.

If $\min(|A|,|B|) \leq \sqrt{n}$, then $\min(\sqrt{n},|A|,|B|)/\sqrt{\min(|A|,|B|)} = \sqrt{\min(|A|,|B|)} \leq n^{\frac{1}{4}}$.


If $\min(|A|,|B|) > \sqrt{n}$, then $\min(\sqrt{n},|A|,|B|)/\sqrt{\min(|A|,|B|)} < \sqrt{n} / n^{\frac{1}{4}} = n^{\frac{1}{4}}$.

%
\noindent
We then apply Proposition~\ref{prop:clog2n-approx} and Lemmas~\ref{thm:cross_lower} and~\ref{thm:cross_upper} and the theorem follows.
\end{proof}

\section{Concluding remarks}
\label{sec:conclude}

In this paper we give a thorough study of different variations of
the \emph{Border Minimization Problem}.
We prove NP-hardness results and improved approximation algorithm.
For $\PBMP$ in which the position of the probes
is given and the goal is to find the embedding,
we show the hardness via a reduction from the
Shortest Common Supersequence problem.
For $\OBMP$
(position not given) in which the array is an 1D array, we give an
NP-hardness reduction from the Hamiltonian Path problem.
We further show that $\OBMP$ can be reduced to $\BMP$ and thus,
$\BMP$ is NP-hard.
We also give a better approximation algorithm for $\BMP$.

Contrasting with the previous result in \cite{LWY+08} that
$\OPBMP$ is polynomial time solvable,
our hardness results show that
(i) the dimension differentiates the complexity of $\PBMP$;
(ii) for 1D array, whether placement is given differentiates
the complexity of $\BMP$;
(iii) $\BMP$ is NP-hard regardless of the dimension of the array.

In the reduction from Hamiltonian Path problem to $\OBMP$,
the size of the alphabet is polynomial in terms of the number of sequences.
An interesting question is whether $\OBMP$ stays NP-hard even
for constant of characters in the alphabet.
Another natural open question is to further improve approximation
algorithms for the problem and/or to derive inapproximability
results.


\comment{
==========================

\section{NP-Hardness of P-BMP}

The P-BMP is defined as follows.

\begin{problem} (P-BMP)
\label{problem:pbmp}
 Given an $n \times n$ matrix and a string $s_{ij}$ over an alphabet $\Sigma$ placed in each cell $(i,j)$ of the matrix, find a sequence of masks which reconstructs the matrix and the total border of the masks is minimized.
\end{problem}

\begin{theorem}
The P-BMP problem is NP-Hard.
\end{theorem}
\begin{proof}

We present a reduction from the Shortest Common Supersequence (SCS) problem on the binary alphabet (we consider $\Sigma = \{0,1\}$), which is known to be NP-Hard. Given an instance of the SCS problem (i.e. $k$ binary strings $s_1, ... s_k$) we construct the following $2k+1 \times 2k+1$ matrix:

\begin{itemize}

\item all the cells of the rows $1, 5, 6, 7, 8, 9, \dots$ of the matrix contain the string ``\$'' (where \$ is character different from $0$ and $1$).
\item all the cells of the second line are filled with the same string ``$0^n$'', where $n = \max_{i=1}^k |s_i|$ (i.e. the length is maximum length of any string from the input).
\item the third line is: ``\$'', ``$s_1$'', ``\$'', ``$s_2$'', ``\$'', ... ,``\$'', ``$s_k$'', ``\$''.
\item all the cells of the fourth line are filled with the same string ``$1^n$''.

\end{itemize}

%
%

The idea is to make the optimal solution of Problem~\ref{problem:pbmp} to be also minimal in the number of masks used. First of all, notice that the SCS of the given strings is less that $2n$ (where, as we previously mentioned, $n$ is the length of the longest string in the input), since $(01)^n$ is a valid supersequence of all the strings.

Now, I argue that the optimal sequence of frames (with respect to the perimeter) must contain at each step either a whole row of $0$'s or a whole row of $1$'s. The idea is that we do not want to split this two lines into chunks since we increase the perimeter of the border. The cells which contain ``\$'' strings are just used as delimiters. The best strategy is to use one mask which covers all of them at once (since the ``\$'' sign does not appear in any word $s_1, s_2, \dots, s_k$). To make the presentation clearer we omit the perimeter of this mask when we compute the optimal solution. Furthermore, define $S = \sum_{i=1}^k |s_i|$ (i.e. the sum of the lengths of all words). If we select the entire row of $0$'s or the entire row of $1$'s at each step the total perimeter is: $2 \cdot 2 \cdot (2k+1) \cdot n  + 2\cdot S $. Notice, that if we decide not to select an entire row of $0$'s or $1$'s at each step, the perimeter can only get bigger: in any sequence of masks, the only case in which the perimeter can be reduced is by joining the border of two cells from second and third row, or from third and forth row; by splitting a line of $0$'s and $1$'s into two pieces we increase the total perimeter by at least $2$. [COMMENT: I am sorry I don't know how to express this part more formally. It seems natural to me and I hope I am not wrong anywhere. I will try to expand this part if it is necessary.]

The goal is to make the sequence of masks used to correspond to values in the shortest common supersequence. This is not true at this step, since the SCS may be shorter than $2n$. Therefore, we can have some masks that contain only the row of $0$'s (or $1$'s) without containing any cells from the third row, or we can have a set of masks that duplicates some characters in the SCS (this can be done without increasing the cost in the perimeter if we have enough rows or $0$'s and $1$'s). We can fix this problem in the following way. We modify the reduction by replacing $0^n$ and $1^n$ in the second, respectively the forth, row with $0^p$ and $1^q$, where $p$ and $q$ decrease from $n$ to $0$. Therefore, assuming that we have a polynomial time algorithm that solves the P-BMP problem, we can find the SCS of the given strings as follows:
\begin{itemize}
\item Start with $p=n$, and $q = n$ and decrease $p$ by one at each step;
\item Run the polynomial time algorithm on the instance with $0^p$, $1^q$
\item If at each step the optimal value decreases with exactly $4k + 2$ we have a redundant row, since we know that each row of $0$'s and $1$'s which is selected by itself has perimeter $4k + 2$, and if there is a symbol from the supersequence that is duplicated, it can be joined with another row.
\item When $p$ cannot be decreased, we repeat the same procedure with $q$.
\item When we have no more redundant rows we stop and we have found the SCS.
\end{itemize}
\end{proof}

\begin{example}
$s_1 = 010, s_2 = 100, s_3 = 0$. The optimal SCS is $0100$. The corresponding matrix is:

\begin{tabular}{|r|r|r|r|r|r|r|}
\hline
 \$ & \$ & \$ & \$ & \$ & \$ & \$ \\
\hline
 000 & 000 & 000 & 000 & 000 & 000 & 000 \\
\hline
 \$ & 010 & \$ & 100 & \$ & 0 & \$ \\
\hline
 111 & 111 & 111 & 111 & 111 & 111 & 111 \\
\hline
 \$ & \$ & \$ & \$ & \$ & \$ & \$ \\
\hline
 \$ & \$ & \$ & \$ & \$ & \$ & \$ \\
\hline
\$ & \$ & \$ & \$ & \$ & \$ & \$ \\
\hline
\end{tabular}

 One possible optimal sequence of frames is $0, 1, 0, 0, 1, 1$. The last two ones are redundant and will be removed if we reduce the number of ones by $2$. Therefore the matrix will look like:

\begin{tabular}{|r|r|r|r|r|r|r|}
\hline
 \$ & \$ & \$ & \$ & \$ & \$ & \$ \\
\hline
 000 & 000 & 000 & 000 & 000 & 000 & 000 \\
\hline
 \$ & 010 & \$ & 100 & \$ & 0 & \$ \\
\hline
 1 & 1 & 1 & 1 & 1 & 1 & 1 \\
\hline
 \$ & \$ & \$ & \$ & \$ & \$ & \$ \\
\hline
 \$ & \$ & \$ & \$ & \$ & \$ & \$ \\
\hline
\$ & \$ & \$ & \$ & \$ & \$ & \$ \\
\hline
\end{tabular}

~\\

In this way the only optimal sequence of frames (both in terms of perimeter and length) is 0,1,0,0 which is also the SCS of the given strings.
\end{example}

\section{NP-Hardness of BMP}

In this section we prove the $NP$-Hardness of the one dimensional Border Minimization Problem ($1D - BMP$). The $1D-BMP$ is defined as follows.

\begin{problem} ($1D - BMP$)
\label{problem:1d-bmp}
 We are given a set of $n$ length-$\ell$ probes $P = \{p_1, p_2, \dots, p_n\}$ and a one dimensional array of size $n$. The goal is to find a placement of the probes into the cells of the array and a sequence of masks which reconstructs the matrix and the total border of the masks is minimized.
\end{problem}

In our proof we use the following problem, which is known to be NP-Hard~\cite{Garey_Johnson}.

\begin{problem} (Hamiltonian Path)
In an undirected graph a Hamiltonian path is a path which visits each vertex exactly once. Given an undirected graph, decide if a Hamiltonian Path exists.
\end{problem}

\begin{theorem}
The $\OBMP$ problem is $NP$-Hard.
\end{theorem}
\begin{proof}

We present a reduction from the Hamiltonian Path problem. Given an
undirected graph $G = (V,E)$, with $|V| = n$ and $|E| = m$, we
construct an instance of the $1D-BMP$ problem in the following
way:

\vspace{-1ex}
\begin{itemize}
\item add a symbol in the alphabet for each edge: for an edge $(i,j) \in E$ we add the character $ij$ to $\Sigma$. Since the graph is undirected we restrict all the edges $ij$ to have $i < j$, to avoid confusion.
\item add a probe for each vertex in the graph. Therefore, we have an array of size $1 \times n$. The probe $v_i$ consists of all the characters corresponding to all the edges incident to it.
\end{itemize}

\begin{example}

If $V = \{1,2,3\}$ and $E = \{12, 13\}$, then:

$$v_1 = 1213$$
$$v_2 = 12$$
$$v_3 = 13$$
\end{example}

To make the presentation clearer, in the proof we assume that even if a character is on the first, or the last cell of the array, and we place a mask on that character, the border still has size $2$ (not $1$ as in the original setting).

We can force this to happen, by introducing two new probes, one containing $n+1$ $\$$ symbols and the other one containing $n+1$ $\#$ symbols ($\$$ and $\#$ are two special characters that do not occur in the alphabet). We know that in an optimal solution these two probes have to be placed in the first, and the last cell (otherwise the border length is too big).

Now, we have to prove that the graph $G$ has a Hamiltonian path if and only if the optimal solution of the $1D-BMP$ instance is $4m - 2(n-1)$.

We prove the ``if'' part.  If the graph $G$ has a Hamiltonian path $v_1 ..., v_n$, then we place the probes in this order in the cells of the array. If we create masks which contain one character and apply them, the total cost is two times the sum of the length of all probes. Since the length of a probe is actually the degree of that vertex, the cost is:

$$ \sum_{i=1}^n 2d(i) $$

But since $v_1,\dots,v_n$ is a Hamiltonian path we can select the characters $v_iv_{i+1}$ from two neighboring vertices at once. In this way, the cost is reduced with two times the number of edges in the Hamiltonian path (which is $n-1$).

Therefore, the cost is:

$$ \sum_{i=1}^n 2d(i) - 2(n-1) = 4m - 2(n-1)$$

We now prove the ``only if'' part. Each edge character
only appears in two vertex sequences.  So to have a save for this
character, we can only have these two sequences being neighbors in
the graph $G$.  So in order to save $2(n-1)$, we need each sequence
to save one with its neighbor in the array, so the way the sequence
in the array should lead to a Hamiltonian path

\end{proof}
}


  \bibliographystyle{abbrv}
  \bibliography{border}

\end{document}